\documentclass[12pt]{article}

\usepackage{amsthm}

\usepackage[a4paper, includefoot,
            left=3cm, right=1.5cm,
            top=2cm, bottom=2cm,
            headsep=1cm, footskip=1cm]{geometry}

\theoremstyle{definition}
\newtheorem{lemma}{Lemma}[section]
\newtheorem{proposition}{Proposition}[section]
\newtheorem{remark}{Remark}[section]

\usepackage[utf8]{inputenc}
\usepackage[T2A]{fontenc}
\usepackage[english]{babel}
\usepackage[intlimits]{amsmath}
\usepackage{amssymb}
\usepackage{subfigure}
\usepackage{authblk}
\allowdisplaybreaks[4]

\usepackage{graphicx}

\newtheorem{definition}{Definition}

\newtheorem{theorem}{Theorem}[section]

\newtheorem{corollary}{Corollary}[section]

\newcommand{\footremember}[2]
\makeatletter
\makeatother

\sloppy

\begin{document}

\title{Two asymptotic approaches for the exponential signal and harmonic noise    in Singular Spectrum Analysis}
\author[1]{Elizaveta Ivanova}
\author[2]{Vladimir Nekrutkin}
\affil[1]{Junior Researcher, Speech Technology Center, 4 Krasutsky str., St. Petersburg, 196084, Russia,
E-mail imst563@mail.com}
\affil[2]{Associate professor, St. Petersburg State University, 7/9 Universitetskaya  nab., St. Petersburg, 199034, Russia,
E-mail vnekr@statmod.ru}
\date{}
\maketitle

\begin{abstract}
The general theoretical approach to the asymptotic extraction of the signal series from the perturbed signal with the help of Singular Spectrum Analysis (briefly, SSA) was already outlined in Nekrutkin 2010, SII, v. 3, 297--319.

In this paper we consider the example of such an analysis applied to the increasing exponential signal and the sinusoidal noise.
It is proved that if the signal rapidly tends to infinity, then the so-called reconstruction errors of SSA do not uniformly tend to  zero as the series length tends to infinity. More precisely, in this case any finite number of last terms of the error series do not tend to any
finite or infinite values.

On the contrary, for the ``discretization'' scheme with the
 bounded from above  exponential signal, all elements of the error series tend to zero.
 This effect shows that the discretization model can be an effective tool in the theoretical SSA considerations with increasing signals.

\end{abstract}

{\bf AMS Subject Classification 2010}: Primary {65G99}, {65F30}; secondary {65F15}.

{\bf Keywords}: Singular Spectrum Analysis,
signal extraction, perturbation expansions, asymptotical analysis.

\section{Introduction}
Let us start with the general construction described in \cite{Nekrutkin10}.
Consider the real-valued ``signal'' series $\mathrm{F}_N =(x_1,\ldots,x_{N-1})$, $1<L<N-1$.
Transfer the series $\mathrm{F}_N$ into the  Hankel ``trajectory''  $L\times K$-matrix ${\bf H}$
with rows $(x_j,\ldots,x_{K+j-1})$, where $0\leq j<L$ and $L+K=N+1$.

It is supposed that $d\stackrel{\rm def}={\rm rank}\, {\bf H}< \min(K,L)$.
Denote $\mathbb{U}_0$ the eigenspace corresponding to the zero eigenvalue of the matrix
 ${\bf A}\stackrel{\rm def}={\bf H}{\bf H}^{\rm T}$.
Then $ d=\dim \mathbb{U}_0^\perp$ and $\dim \mathbb{U}_0=K-d>0$.

Let $\mathrm{F}_N(\delta)\!=\!\mathrm{F}_N\!+\!\delta\mathrm{E}_N$  be the perturbed signal, where $\mathrm{E}_N=(e_0,\ldots,e_{N-1})$
is a certain ``noise'' series and  $\delta$ stands for a formal perturbation parameter.
Then we come to the perturbed matrix ${\bf H}(\delta)\!=\!{\bf H}\!+\!\delta {\bf E}$ with the Hankel matrix ${\bf E}$ produced from the
noise series~$\mathrm{E}_N$.

If  $\delta$ is sufficiently small, then the linear space $\mathbb{U}_0^\perp(\delta)$ spanned by
$d$ main left singular vectors of the matrix ${\bf H(\delta)}$ can serve as an  approximation to
 $\mathbb{U}_0^\perp$.
 The quality of this approximation can be measured by the spectral norm  $\big\|{\bf P}_0^\perp(\delta)-{\bf P}_0^\perp\big\|$, where
 ${\bf P}_0^\perp$ and ${\bf P}_0^\perp(\delta)$ are orthogonal projections on the linear spaces $\mathbb{U}_0^\perp$ and $\mathbb{U}_0^\perp(\delta)$
   correspondingly. Note that $\big\|{\bf P}_0^\perp(\delta)-{\bf P}_0^\perp\big\|$ is nothing but the sine
of   the largest principal angle between unperturbed and perturbed signal subspaces $\mathbb{U}_0^\perp$ and $\mathbb{U}_0^\perp(\delta)$.

It is well-known that a lot of subspace-based methods of signal processing are relying on the close proximity of $\mathbb{U}_0^\perp$ and $\mathbb{U}_0^\perp(\delta)$. Still  the main goal of  Singular Spectrum Analysis (briefly, SSA) is the approximation (or ``reconstruction'') of the signal
$\mathrm{F}_N$ from the perturbed signal $\mathrm{F}_N(\delta)$, see \cite{GNZh01} for the detailed description.

As it is mentioned in \cite[sect. 5]{Nekrutkin10}, the analysis of the errors of this approximation can be expressed   in such a manner.
First of all,  the ``hankelization'' (in other terms, ``diagonal averaging'') operator $\mathcal{S}$ is defined.

If the hankelization operator $\mathcal{S}$ is applied to some $L\times K$ matrix $\mathbf{Y}=\{y_{k,\ell}\}_{k=1,\ell=1}^{L,K}$ then the resulting $L\times K$ matrix  $\mathcal{S}\mathbf{Y}$
has equal values denoted by $(\mathcal{S}\mathbf{Y})_j$ on its anti-diagonals $\{(k,\ell): \text{such that } k+\ell-2=j$\}, where $j=0,\ldots, N-1$, $k=1,\ldots, L$ and
$\ell=1,\ldots, K$. Besides, $(\mathcal{S}\mathbf{Y})_j$ equals to the average of inputs $y_{k,\ell}$ on this anti-diagonal.

Then, under denotation
\begin{gather}
\label{eq:deltaH_pres_1}
\Delta_{\delta}({\bf H}) =
({\bf P}_0^\perp(\delta) - {\bf P}_0^\perp){\bf H}(\delta) + \delta {\bf P}_0^\perp{\bf E},
\end{gather}
 the series $r_0,\ldots, r_{N-1}$ with
 \begin{gather}
 \label{eq:rec_err}
 r_j=\big(\mathcal{S}\Delta_{\delta}({\bf H})\big)_j
 \end{gather}
  is the series of the reconstruction SSA errors.

The peculiarity of the approach described in \cite{Nekrutkin10} can be  expressed as follows. The usual method of the general
perturbation  analysis is to consider the small perturbation parameter $\delta$ and therefore to investigate the linear in $\delta$
approximation of the problem. For the subspace-based methods this means
 that $N$ is fixed and $\delta \downarrow 0$, see for example
\cite{BadeauRD_08} - \cite{VSch17}.

Still  SSA is usually characterized by big series length $N$, and formally this corresponds to fixed $\delta$ and $N\rightarrow \infty$.
The mathematical technique that is used in \cite{Nekrutkin10} for this goal, goes back to \cite{Kato66}
and consists of the asymptotic analysis of the corresponding perturbation expansions.

  This paper is devoted to the example of such an analysis for the exponentially growing signal and the harmonic noise.
  This model is not so far from real-life series.  For example, the series ``Gasoline demand'' (see Fig.  \ref{fig:gasoline}, data is taken from \cite{AbrahamL05})
  can be approximated by the sum of two addends: the increasing trend of the exponential form and the 12-month  periodicity.

 \begin{figure}[h!]
\centering
\includegraphics[width=.7\linewidth]{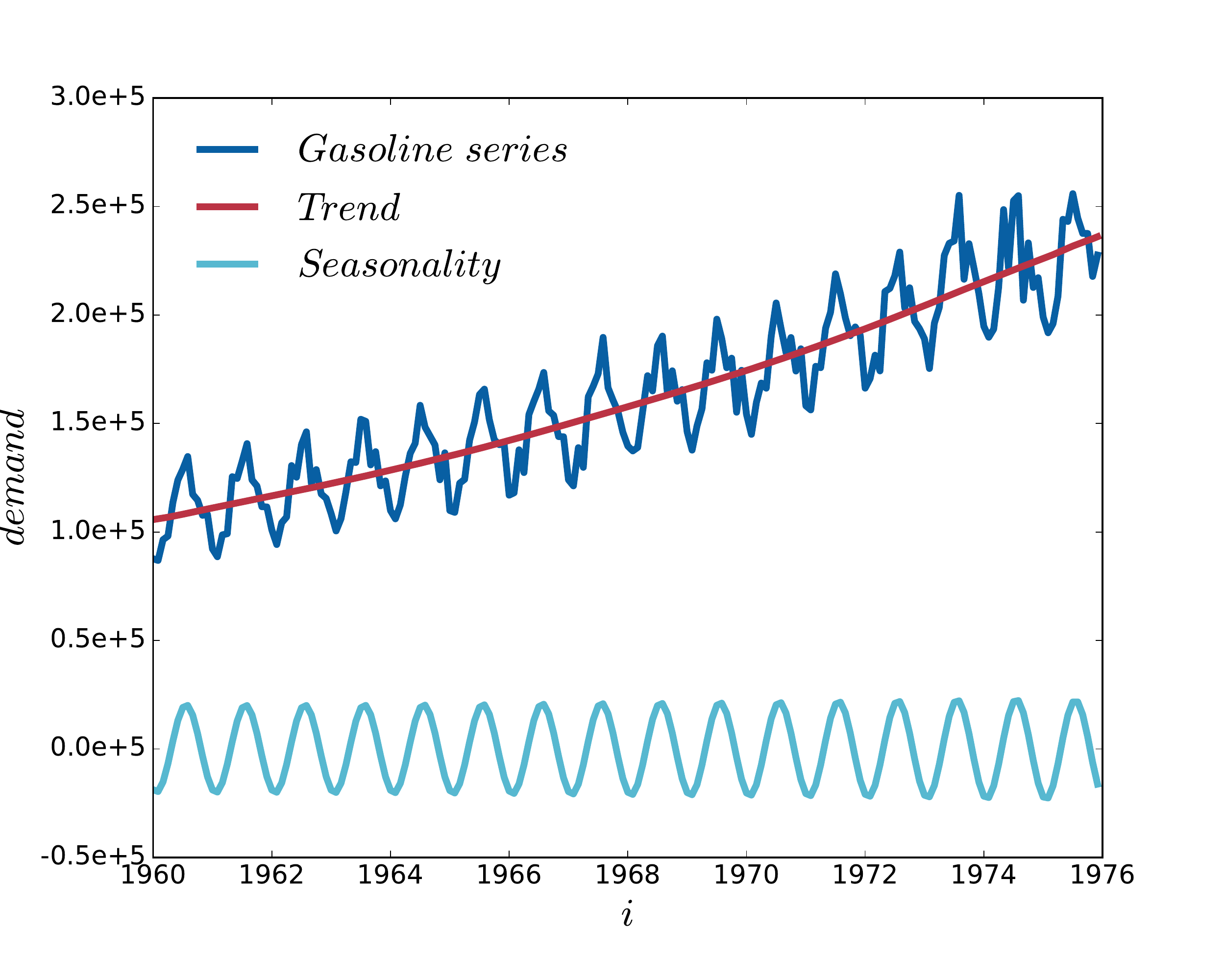}
\caption{Gasoline demand, monthly Jan 1960 -- Dec 1975, gallon millions, Ontario.}
\label{fig:gasoline}
\end{figure}

  Note that both trend and periodicity in Fig. \ref{fig:gasoline} are produced by SSA with $L=N/2=96$. Naturally, the trend is reconstructed by the first eigentriple of the decomposition, while the periodicity is produced with the help of eigentriples 2 and 3.


In this paper we deal with  the following construction.
A certain time period $[0,T]$ is divided into $N$ intervals of length $\Delta=T/N$, and we consider the signal $x_n=e^{\theta \Delta n}$
and the noise $e_n=\cos(\xi n+\varphi)$, so that the perturbed signal has the form
\begin{gather}
\label{eq:series_gen}
  f_n=e^{\theta \Delta n}+\delta \cos(\xi n+\varphi),\quad n=0,\ldots, N-1,
\end{gather}
where  $\theta >0$, $\xi=2\pi \omega$ with $\omega \in (0,1/2)$, and $\varphi \in [0,2\pi)$.

As in \cite{Nekrutkin10}, we are interesting in the behavior of the reconstruction errors for long signals. For this goal we consider
two asymptotic schemes as $N\rightarrow \infty$.
\begin{enumerate}
  \item
  $\Delta $ is fixed, further we put $\Delta=1$. Then $T=N\rightarrow \infty$ and \eqref{eq:series_gen} has the form
 \begin{gather}
\label{eq:series_incrT}
  f_n=a^{ n}+\delta \cos(\xi n+\varphi),\quad n=0,\ldots, N-1
\end{gather}
with $a=e^\theta>1$.
  \item
  $T$ is fixed  and $\Delta=T/N\rightarrow 0$. Then we come to
  \begin{gather}
\label{eq:series_constT}
  f_n=a^{T n/N}+\delta \cos(\xi n+\varphi),\quad n=0,\ldots N-1
\end{gather}
with the same $a$. Further we apply the term ``discretization'' for this scheme.
\end{enumerate}

Note that in both cases $d=\dim \mathbb{U}_0^\perp=1$ for $L,K>1$. Yet there are considerable
 differences between \eqref{eq:series_incrT} and \eqref{eq:series_constT}. In particular, the signal of the series \eqref{eq:series_incrT} tends to
 infinity as $N\rightarrow \infty$ while $a^{T n/N}<a^T= {\rm const}$ in \eqref{eq:series_constT}.
Though the number of noise periods tends to infinity in both models, the discretization model  seems to
 describe the real-life situations better than \eqref{eq:series_incrT}. For example, the trend of ``Gasoline demand'' series growths very slowly over the period of observations,
while the number of periods of the 12-month harmonic  is relatively~big.

For both models, our interest lies in the asymptotic behavior of $\big\|{\bf P}_0^\perp(\delta)-{\bf P}_0^\perp\big\|$ and of the reconstruction errors. Since the model \eqref{eq:series_incrT} corresponds to the style of all examples in~\cite{Nekrutkin10},
several results about this series can be borrowed from \cite{Nekrutkin10}.

In particular, it is already shown for \eqref{eq:series_incrT}, see \cite[sect. 3.2.1]{Nekrutkin10}, that under the conditions $N\rightarrow \infty$ and $\min(L,K)\rightarrow \infty$
\begin{gather}
\label{eq:diff_proj}
\big\|{\bf P}_0^\perp(\delta)-{\bf P}_0^\perp\big\|=O(N a^{-N})
\end{gather}
and
\begin{gather}
\label{eq:main_term V}
\big\|{\bf P}_0^{\perp}(\delta)-{\bf P}_0^{\perp}-\delta{\bf V}_0^{(1)}\big\|=O(N^2 a^{-2N})
\end{gather}
for any $\delta\in \mathbb{R}$, where
\begin{gather}
{\bf V}^{(1)}_0=
{\bf P}_{0}{\bf E}{\bf H}^{\rm T}{\bf S}_0+{\bf S}_0{\bf H}{\bf E}^{\rm T}{\bf P}_{0}
\label{eq:lin_proj}
\end{gather}
is the linear  term of the  expansion of ${\bf P}_0^{\perp}(\delta)-{\bf P}_0^{\perp}$ into power series (see
\cite[theor. 2.1]{Nekrutkin10}), and $\mathbf{S}_0$ stands for the pseudoinverse of $\mathbf{HH}^{\rm T}$. Besides,
$\big\|{\bf V}^{(1)}_0\| = O \big(N a^{-N}\big)$.

Note that in the case $L\sim \alpha N$ with $\alpha \in (0,1)$ more careful  calculations lead to the precise asymptotic
\begin{gather}
\label{eq:diffP_pr}
\frac{a^{N}}{\sqrt{N}}\ \big\|{\bf P}_0^{\perp}(\delta) - {\bf P}_0^{\perp}\big\| \rightarrow |\delta|\,\frac{a^2 - 1}{a}\ \sqrt{\frac{\alpha(a^2 - 1)}{2(a^2 + 1 - 2a\cos\xi)}}
\end{gather}
as well as to more precise inequality
\begin{gather}
\label{eq:main_term V_pr}
\big\|{\bf P}_0^{\perp}(\delta)-{\bf P}_0^{\perp}-\delta{\bf V}_0^{(1)}\big\|=O(N^{3/2}a^{-2N})
\end{gather}
instead of \eqref{eq:main_term V}.

Since
we omit here proofs of both   \eqref{eq:diffP_pr} and \eqref{eq:main_term V_pr}, we use the inequality \eqref{eq:main_term V}
for the series \eqref{eq:series_incrT} in all
further considerations.


Section \ref{sect:rec_an} of the paper is devoted to the reconstruction errors $r_j=r_j(N)$  for the model \eqref{eq:series_incrT}.
Proposition \ref{prop:rho_rec_cos_noise} and Corollary \ref{cor:err_rec_cos_noise} show that  $r_j\rightarrow 0$ as
$N\rightarrow \infty$ if, roughly speaking, $j$ is separated from $N$.

On the contrary, if $j$ is close to $N$, then $r_j$ does not converge to zero. Moreover, the  asymptotic  behavior of $r_j$ in this case
depends on  the rationality/irrationality of the frequency $\omega=\xi/2\pi$, see propositions \ref{prop:limpoints_rat} and
\ref{prop:r_irrat}.

The model \eqref{eq:series_constT} under the assumption $L/N\rightarrow \alpha \in (0,1) $ is investigated in  Section \ref{sect:rec_ant}. It is proved in Theorem \ref{theor:diff_proj_discr} that in this case $\big\|\mathbf{P}_0^\perp(\delta)-\mathbf{P}_0^\perp\big\|=O(N^{-1})$
as $N\rightarrow \infty$ for the sufficiently small $\delta$.

Unlike the model \eqref{eq:series_incrT}, the reconstruction errors $r_j$ in the discretization scheme tend to zero for all $j$,
see Theorem \ref{theor:main_discr}. Thus the model \eqref{eq:series_constT} seems to be more practical than   \eqref{eq:series_incrT}.

In what follows, we always assume the {\it regular behavior} of the parameter $L=L(N)$ as $N\rightarrow \infty$. This means that
$L\sim \alpha N$ with $\alpha\in (0,1)$. Still several further inequalities  are valid under less restrictive condition
$\min(L,K)\rightarrow \infty$.

\section{Reconstruction errors for the signal $x_n=a^n$}
\label{sect:rec_an}

Consider the series \eqref{eq:series_incrT} and
suppose that  $L\sim \alpha N$ with $\alpha\in (0,1)$ as $N\rightarrow \infty$.
Our aim is to investigate the asymptotic properties of the reconstruction errors
\eqref{eq:deltaH_pres_1}, \eqref{eq:rec_err} for the perturbed series \eqref{eq:series_incrT}.
Since the result of the reconstruction does not
change if we reduce $\mathbf{H}$ by $\mathbf{H}^{\rm T}$, assume that
$L\leq K$.

The base of the approach is the well-known inequality $\|\mathbf{A}\|_{\max}\leq \|\mathbf{A}\|$, where $\|\mathbf{A}\|$ stands for the usual spectral norm of the
matrix $\mathbf{A}$ and  $\|\mathbf{A}\|_{\max}=\max |a_{ij}|$ for the matrix $\mathbf{A}$ with entries $a_{ij}$. Therefore, if $\|\mathbf{A}\|$ is small, then
$\|\mathcal{S}\mathbf{A}\|_{\max}$ is small as well.

Thus we rewrite \eqref{eq:deltaH_pres_1} in the form
\begin{gather}
\label{eq:deltaH_pres1}
\Delta_{\delta}({\bf H}) =
\big({\bf P}_0^\perp(\delta) - {\bf P}_0^\perp- \delta{\bf V}_0^{(1)}\big){\bf H}(\delta) + \delta {\bf P}_0^\perp{\bf E}+
\delta{\bf V}_0^{(1)}\big({\bf H}+\delta\mathbf{E}\big)\ .
\end{gather}

It is easy to check that  $\|{\bf E}\|=O(N)$, $\|{\bf H}\|=O(a^N)$ and $\|{\bf V}_0^{(1)}{\bf E}\| = O(N^{2}a^{-N})$.
Applying \eqref{eq:main_term V}, we see that
\begin{gather*}
\|\big({\bf P}_0^{\perp}(\delta) - {\bf P}_0^{\perp} - \delta{\bf V}_0^{(1)}\big){\bf H}(\delta)\| = O(N^{2}a^{-N}).
\end{gather*}

 This means that the reconstruction errors have the form
 \begin{gather}
 \label{eq:rj_gen}
r_j=r_j(N)= \delta\big(\mathcal{S}({\bf V}_0^{(1)}{\bf H}+{\bf P}_0^\perp{\bf E})\big)_j+O(N^2 a^{-N}), \quad j=0,\ldots, N-1,
 \end{gather}
 and all we need is to investigate the asymptotical behavior of the  series
   \begin{gather}
   \label{eq:rho_j}
   \rho_j\stackrel{\rm def}=\big(\mathcal{S}({\bf V}_0^{(1)}{\bf H}+{\bf P}_0^\perp{\bf E})\big)_j=
   \big(\mathcal{S}({\bf P}_{0}{\bf E}{\bf H}^{\rm T}{\bf S}_0\mathbf{H}\big)_j+
  \big(\mathcal{S}({\bf P}_0^\perp{\bf E})\big)_j.
   \end{gather}

   \subsection{The reconstruction errors $r_j(N)$ in the case $N-j\rightarrow \infty$}
   Let us start with the case when $j$ is not close to $N$.
  \begin{proposition}
\label{prop:rho_rec_cos_noise}
Let $\rho_j=\rho_j(N)$ be defined by  \eqref{eq:rho_j}.
If $L$ is regular as $N\rightarrow   \infty$  and $N-j\rightarrow \infty$, then $\rho_j(N)\rightarrow 0$ as $N\rightarrow \infty$.
 \end{proposition}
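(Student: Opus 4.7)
My plan is to exploit the rank-one structure of $\mathbf{H}$ so as to reduce $\rho_j$ to three explicit summands that can be estimated by elementary geometric-series bounds. Writing $\mathbf{H}=uv^{\rm T}$ with $u=(1,a,\ldots,a^{L-1})^{\rm T}$, $v=(1,a,\ldots,a^{K-1})^{\rm T}$, $U=u/\|u\|$ and $V=v/\|v\|$, one has $\mathbf{P}_0^\perp=UU^{\rm T}$, $\mathbf{S}_0=UU^{\rm T}/(\|u\|^2\|v\|^2)$, $\mathbf{H}^{\rm T}\mathbf{S}_0\mathbf{H}=VV^{\rm T}$ and, crucially, $\mathbf{P}_0\mathbf{H}=0$. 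Substituting into \eqref{eq:rho_j} and using $\mathbf{P}_0=I-UU^{\rm T}$ gives
\begin{equation*}
\rho_j=\bigl(\mathcal{S}\,\mathbf{E}VV^{\rm T}\bigr)_j+\bigl(\mathcal{S}\,UU^{\rm T}\mathbf{E}\bigr)_j-\bigl(\mathcal{S}\,UU^{\rm T}\mathbf{E}VV^{\rm T}\bigr)_j\;=:\;A_j+C_j-B_j,
\end{equation*}
and it suffices to show that each of $A_j$, $B_j$, $C_j$ vanishes as $N-j\to\infty$.

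The piece $B_j$ is the easiest. Because $UU^{\rm T}\mathbf{E}VV^{\rm T}=(U^{\rm T}\mathbf{E}V)\,UV^{\rm T}$ is rank one with $(k,\ell)$-entry proportional to $a^{k+\ell}$, every entry of its $j$-th anti-diagonal coincides and $B_j=a^jM/(\|u\|^2\|v\|^2)$, where $M=\sum_{j'=0}^{N-1}n_{j'}a^{j'}e_{j'}$ with $n_{j'}=\min(j'+1,L,N-j')$. Splitting $M$ into three blocks according to the regimes of $n_{j'}$, the right-most block equals $a^{N-1}\sum_{m=0}^{L-1}(m+1)a^{-m}e_{N-1-m}=O(a^N)$ because $\sum m\,a^{-m}$ converges, while the left and middle blocks are $O(La^L)$ and $O(La^K)$, both negligible against $a^N$ for $\alpha\in(0,1)$. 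Combined with $\|u\|^2\|v\|^2\sim a^{2(N+1)}/(a^2-1)^2$, this yields $|B_j|=O(a^{j-N})$, which tends to zero precisely when $N-j\to\infty$.

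For $A_j$ and $C_j$ I would use the closed forms $h_k=\sum_{\ell'=0}^{K-1}a^{\ell'}e_{k+\ell'}=\mathrm{Re}\bigl[e^{i(\xi k+\varphi)}(q^K-1)/(q-1)\bigr]$ and $g_\ell=\sum_{k'=0}^{L-1}a^{k'}e_{k'+\ell}=\mathrm{Re}\bigl[e^{i(\xi\ell+\varphi)}(q^L-1)/(q-1)\bigr]$, with $q=ae^{i\xi}$, yielding $|h_k|\leq Ca^K$ and $|g_\ell|\leq Ca^L$. The corresponding anti-diagonal sums then reduce to geometric series in $k$ over the row range $R_j=[k_{\min},k_{\max}]$ of the $j$-th anti-diagonal, and the elementary bounds $\sum_{k\in R_j}a^{-k}\leq a^{1-k_{\min}}/(a-1)$ and $\bigl|\sum_{k\in R_j}(ae^{-i\xi})^k\bigr|=O(a^{k_{\max}+1})$ imply
\begin{equation*}
|A_j|\leq\frac{C\min(1,\,a^{j-K+1})}{n_j},\qquad |C_j|\leq\frac{C\min(1,\,a^{k_{\max}-L+1})}{n_j}.
\end{equation*}

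Under $L\sim\alpha N$ and $N-j\to\infty$, the conclusion then follows by checking the three $j$-ranges separately. On the middle range ($L-1\leq j\leq K-1$) and the right range ($j\geq K-1$) the quantity $n_j$ equals $L$ or $N-j$ and diverges, so both bounds vanish automatically. The main subtlety I anticipate is exactly the left range $j\leq L-1$ with $j$ staying bounded: there $n_j=j+1$ offers no help, and vanishing has to be extracted entirely from the exponential factors $a^{j-K+1}\to 0$ and $a^{j-L+1}\to 0$ hidden in the ratios with $\|v\|^{-2}$ and $\|u\|^{-2}$, both of which decay because $L,K\to\infty$. Collecting the three estimates yields $\rho_j=A_j+C_j-B_j\to 0$, as required.
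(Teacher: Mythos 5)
Your proposal is correct and follows essentially the same route as the paper: your $A_j$, $C_j$, $B_j$ are exactly the paper's three rank-one pieces $\mathbf{E}W_KW_K^{\rm T}/\|W_K\|^2$, $W_LW_L^{\rm T}\mathbf{E}/\|W_L\|^2$ and $\varUpsilon_{L,K}(a,\varphi)\,W_LW_K^{\rm T}/(\|W_L\|^2\|W_K\|^2)$ (note $M=u^{\rm T}\mathbf{E}v=\varUpsilon_{L,K}(a,\varphi)$), estimated by the same geometric-series bounds and the same three-range case analysis, and your final bounds reproduce \eqref{eq:est_rho} up to constants.
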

 \begin{proof}
 As it was already mentioned, it is sufficient to assume that $L\leq K$.
 First of all, for fixed
$\xi\in (0,\, \pi)$, $b > 1$, $\psi \in [0,\, 2\pi)$ and an integer
$M \geq 1$ denote
\begin{gather}
\label{eq:phi_l_eq}
\varPhi_M(b, \psi) = \sum_{j = 0}^{M - 1}b^j\cos(\xi j + \psi)
\end{gather}
and
\begin{gather}
\label{eq:upsilon}
\varUpsilon_{T, M}(b, \psi) = \sum_{j = 0}^{T - 1}b^j\varPhi_M(b, \xi j + \psi).
\end{gather}
Evidently,
\begin{gather}
\label{eq:varPhi_varUpsilon_brief}
|\varPhi_M(b, \psi)|\leq (b^M-1)/(b-1) \quad \text{and} \quad
|\varUpsilon_{T, M}(b, \psi)|\leq \frac{(b^M-1)(b^T-1)}{(b-1)^2}\ .
\end{gather}

Under the denotation $W_M=(0,a,\ldots, a^{M-1})^{\rm T}$,
\begin{gather}
\label{eq:P0S0}
{\bf P}_0^{\perp} = \frac{W_LW_L^{\rm T}}{\|W_L\|^2} \quad \text{and} \quad
{\bf S}_0 = \frac{W_LW_L^{\rm T}}{\|W_L\|^4\|W_K\|^2}
\end{gather}
and therefore
\begin{gather*}
{\bf P}_0^{\perp} {\bf E} + {\bf V}_0^{(1)}{\bf H} = \frac{W_LW_L^{\rm T}}{\|W_L\|^2}\,{\bf E} + \left({\bf I} - \frac{W_LW_L^{\rm T}}{\|W_L\|^2}\right){\bf E}{\bf H}^{\rm T}\,\frac{W_LW_L^{\rm T}}{\|W_L\|^4\|W_K\|^2}\,{\bf H} =\\
\frac{W_LW_L^{\rm T}{\bf E}}{\|W_L\|^2} + \frac{{\bf E}W_KW_K^{\rm T}}{\|W_K\|^2} -
\varUpsilon_{L,K}(a, \varphi)\,\frac{W_LW_K^{\rm T}}{\|W_L\|^2\|W_K\|^2}= \mathbf{J}_1+\mathbf{J}_2+\mathbf{J}_3.
\end{gather*}

Since $\big(\mathcal{S}(W_LW_K^{\rm T})\big)_j=a^j$ and $\|W_M\|^2=(a^{2M}-1)/(a^2-1)$, then
\begin{gather*}
\left|\big(\mathcal{S}(\mathbf{J}_3)\big)_j\right|= |\varUpsilon_{L,K}(a, \varphi)|\ \frac{a^j(a^2-1)^2}{(a^{2L}-1)(a^{2K}-1)}\leq
\\
\frac{(a^L-1)(a^K-1)}{(a-1)^2}\ \frac{a^j(a^2-1)^2}{(a^{2L}-1)(a^{2K}-1)}=
\frac{a^j(a+1)^2}{(a^{L}+1)(a^{K}+1)}\ .
\end{gather*}

Let us now check $\mathbf{J}_1$ and $\mathbf{J}_2$.
In view of the equalities
\begin{gather}
\label{tq:EWK}
{\bf E}W_K=
\big(\varPhi_K(a, \varphi), \varPhi_K(a, \xi + \varphi), \ldots, \varPhi_K(a, (L - 1)\xi + \varphi)\big)^{\rm T}
\end{gather}
and
\begin{gather*}
{\bf E}^{\rm T}W_L =
\big(\varPhi_L(a, \varphi), \varPhi_L(a, \xi + \varphi), \ldots, \varPhi_L(a, (K - 1)\xi + \varphi)\big)^{\rm T},
\end{gather*}
we get that
\begin{gather}
\big(\mathcal{S}(W_LW_L^{\rm T}{\bf E})\big)_j
=
\begin{cases}
\displaystyle
      \frac{1}{j + 1} \sum_{k = 0}^ja^{k}\varPhi_L(a, (j - k)\xi + \varphi) &\text{for } 0 \leq j < L,\\
\displaystyle
      \frac{1}{L} \sum_{k = 0}^{L - 1}a^{k}\varPhi_L(a, (j -k)\xi + \varphi) &\text{for } L \leq j < K,\\
\displaystyle
      \frac{1}{N \!-\! j} \sum_{k = j - K + 1}^{N - K}a^{k}\varPhi_L(a, (j - k)\xi + \varphi) &\text{for } K \leq j < N.
\end{cases}
\label{eq:SWWE}
\end{gather}
In the same manner,
\begin{gather*}
\big(\mathcal{S}({\bf E}W_KW_K^{\rm T})\big)_j
=
\begin{cases}
\displaystyle
      \frac{1}{j + 1} \sum_{k = 0}^ja^{j - k}\varPhi_K(a, k\xi + \varphi) &\text{for } 0 \leq j < L,\\
\displaystyle
      \frac{1}{L} \sum_{k = 0}^{L - 1}a^{j - k}\varPhi_K(a, k\xi + \varphi) &\text{for } L \leq j < K,\\
\displaystyle
      \frac{1}{N - j} \sum_{k = j - K + 1}^{N - K}a^{j - k}\varPhi_K(a, k\xi + \varphi) &\text{for } K \leq j < N.
\end{cases}
\end{gather*}
Due to
\eqref{eq:varPhi_varUpsilon_brief},
\begin{gather*}
\Big|\big(\mathcal{S}(W_LW_L^{\rm T}{\bf E})\big)_j\Big|
\leq
\begin{cases}
\displaystyle
      \frac{1}{j + 1} \frac{a^{j+1}-1}{a-1}
      \frac{a^L-1}{a-1}
       &\text{for } 0 \leq j < L,\\
\displaystyle
      \frac{1}{L}
      \left(\frac{a^L-1}{a-1}\right)^2
       &\text{for } L \leq j < K,\\
\displaystyle
      \frac{a^{j-K+1}}{N \!-\! j}\  \frac{a^{N-j}-1}{a-1}\
      \frac{a^L-1}{a-1}
       &\text{for } K \leq j < N.
\end{cases}
\end{gather*}
and
\begin{gather*}
\Big|\big(\mathcal{S}({\bf E}W_KW_K^{\rm T})\big)_j\Big|
\leq
\begin{cases}
\displaystyle
      \frac{1}{j + 1} \frac{a^{j+1}-1}{a-1}
           \frac{a^K-1}{a-1}
      &\text{for } 0 \leq j < L,\\
\displaystyle
      \frac{a^j}{L}\ \frac{a^L-1}{a^L}\frac{1}{a-1}
      \frac{a^K-1}{a-1}
       &\text{for } L \leq j < K,\\
\displaystyle
      \frac{ a^{j-N+K}}{N - j}\ \frac{a^{N-j}-1}{a-1}\,
      \frac{a^K-1}{a-1}
       &\text{for } K \leq j < N.
\end{cases}
\end{gather*}

Therefore,
\begin{gather*}
  |\rho_j| \leq
 \frac{1}{j + 1} \frac{a+1}{a-1} \frac{a^{j+1}-1}{a^{L}+1} +\frac{1}{j + 1} \frac{a+1}{a-1} \frac{a^{j+1}-1}{a^{K}+1}   +
         \frac{a^j(a+1)^2}{(a^{L}+1)(a^{K}+1)}
\end{gather*}
for  $0 \leq j < L$,
\begin{gather*}
|\rho_j|\leq
      \frac{1}{L}\, \frac{a+1}{a-1}\frac{a^L-1}{a^L+1}+\frac{1}{L}\,\frac{a+1}{a-1}\, \frac{a^L-1}{a^L}\frac{a^j}{a^K+1}
      +\frac{a^j(a+1)^2}{(a^{L}+1)(a^{K}+1)}
      \end{gather*}
for $L \leq j < K$, and
\begin{gather*}
|\rho_j| \leq
\frac{a+1}{a-1}\frac{1}{N-j}\frac{a^L-a^{j-N+L}}{a^L+1}+\frac{a+1}{a-1}\frac{1}{N-j}\frac{a^K-a^{j-N+K}}{a^K+1}+
\frac{a^j(a+1)^2}{(a^{L}+1)(a^{K}+1)}
\end{gather*}
if $K \leq j < N$.
Thus there exist a constant $C$ such that for $N$ big enough
\begin{gather}
|\rho_j|\leq C
\begin{cases}
\displaystyle
{a^{-(L-j)}}/(j+1) & \text{for} \ \  0 \leq j < L,\\
 1/L &\text{for} \ \  L \leq j < K,\\
 1/(N-j)+a^{-(N-j)}& \text{for} \ \  K \leq j < N.
 \label{eq:est_rho}
\end{cases}
\end{gather}
The proof is complete.
 \end{proof}
\begin{corollary}
\label{cor:err_rec_cos_noise}
It follows from Proposition \ref{prop:rho_rec_cos_noise} that under conditions $L/N\rightarrow \alpha\in (0,1)$
and
$N-j\rightarrow \infty$ the reconstruction errors $r_j$ tend to zero as  $N\rightarrow \infty$. Moreover, if $L\leq K$, then
$|r_j|$ follow the same inequalities \eqref{eq:est_rho} as $|\rho_j|$ up to the multiplicator $|\delta|$.
\end{corollary}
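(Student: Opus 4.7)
The plan is to bootstrap the corollary from the decomposition \eqref{eq:rj_gen},
\begin{gather*}
r_j = \delta\,\rho_j + O(N^2 a^{-N}),
\end{gather*}
combined with Proposition \ref{prop:rho_rec_cos_noise} and the three-piece estimate \eqref{eq:est_rho} for $|\rho_j|$. For the qualitative assertion I would simply observe that the remainder $O(N^2 a^{-N})$ tends to zero uniformly in $j$ because $a > 1$, while Proposition \ref{prop:rho_rec_cos_noise} yields $\rho_j \to 0$ under the hypothesis $N - j \to \infty$ and the regularity of $L$. The two facts combine to give $r_j \to 0$, which is the first claim.

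For the quantitative assertion, the task reduces to checking that the remainder $N^2 a^{-N}$ can be absorbed into the right-hand side of \eqref{eq:est_rho} on each of the three ranges at the cost of enlarging the constant $C$, after which one simply factors out $|\delta|$. Under the regularity $L \sim \alpha N$ with $\alpha \in (0,1)$ one has $K = N + 1 - L \sim (1 - \alpha) N$ as well, so both $L$ and $K$ are of linear order in $N$. On the middle range $L \leq j < K$ the dominance $N^2 a^{-N} = o(1/L)$ is immediate. On the last range $K \leq j < N$ the quantity $1/(N-j) + a^{-(N-j)}$ is minimized at $j = K$, where it equals $1/(L - 1) + a^{-(L-1)} = \Theta(1/L)$, and the same comparison applies. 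On the first range $0 \leq j < L$ an elementary differentiation shows that $a^{-(L-j)}/(j+1)$ is minimized at the fixed point $j = 1/\ln a - 1$ (independent of $N$) and is there of order $a^{-L}$; since $a^{L - N} = a^{-(1 - \alpha + o(1))N}$ decays exponentially, the remainder $N^2 a^{-N}$ is still super-exponentially smaller.

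Collecting the enlarged constants and pulling out $|\delta|$ then produces the three-piece estimate of the corollary. I do not foresee any real obstacle: each comparison is a matter of orders, and the only point that calls for a moment of calculation is the minimization on $[0, L)$, which is a one-variable calculus exercise.
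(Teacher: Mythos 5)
Your proposal is correct and follows exactly the route the paper intends: the corollary is an immediate consequence of the representation \eqref{eq:rj_gen} together with the bounds \eqref{eq:est_rho} established in the proof of Proposition \ref{prop:rho_rec_cos_noise}, the only work being the (routine) observation that the remainder $O(N^2 a^{-N})$ is dominated by each of the three bounds when $L\sim\alpha N$, which you verify correctly. The paper leaves this verification implicit, so your write-up simply fills in the same argument in detail.
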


\subsection{The reconstruction errors $r_j(N)$ in the case $N-j=O(1)$}
Consider  now the case $j=N-1-\ell$ with $\ell=O(1)$.
Denote
\begin{gather*}
G(a,\xi)=\frac{a^2 - 1}{a(a^2 + 1 - 2a\cos\xi)}\ .
\end{gather*}
\begin{proposition}
\label{prop:last_indices}
If  $\ell = O(1)$ then under the conditions of  Corollary \ref{cor:err_rec_cos_noise}
\begin{gather}
\label{eq:rn1ell}
r_{N - 1 - \ell} = \delta\,G(a,\xi)\,\Big(C_1(\ell)\cos((N - 1)\xi + \varphi) + C_2(\ell)\sin((N - 1)\xi + \varphi)\Big)+ O(a^{-L}),
\end{gather}
where
\begin{gather}
C_1(\ell) = \frac{2}{1 + \ell}\left(a\cos(\ell\xi) - {a^{-\ell}} {\cos\xi}\right) -
(a^2 + 1 - 2a\cos\xi - 2\sin^2\xi)\,{G(a,\xi)}{a^{-\ell}}\ \ \text{and}
\nonumber
\\
C_2(\ell) = \frac{2}{1 + \ell}\left(a\sin(\ell\xi) + {a^{-\ell}}{\sin\xi}\right) - 2\sin\xi(a - \cos\xi)\ {G(a,\xi)}{a^{-\ell}}\ .
\label{eq:Ciell}
\end{gather}
\end{proposition}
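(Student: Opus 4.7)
The plan is to sharpen the bounds of Proposition \ref{prop:rho_rec_cos_noise} into asymptotic equalities in the single regime $j=N-1-\ell$ with $\ell=O(1)$. Since $K\le j<N$ for $N$ large, each of the three pieces in the decomposition $\rho_j=(\mathcal{S}\mathbf{J}_1)_j+(\mathcal{S}\mathbf{J}_2)_j+(\mathcal{S}\mathbf{J}_3)_j$ obtained in the proof of Proposition \ref{prop:rho_rec_cos_noise} has an explicit closed form: $(\mathcal{S}\mathbf{J}_1)_j$ and $(\mathcal{S}\mathbf{J}_2)_j$ are $(\ell+1)$-term sums given by the third branch of \eqref{eq:SWWE} and its analogue for $\mathbf{E}W_KW_K^{\rm T}$, while $(\mathcal{S}\mathbf{J}_3)_j=-\varUpsilon_{L,K}(a,\varphi)\,a^{j}(a^2-1)^2/((a^{2L}-1)(a^{2K}-1))$. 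Because $r_j=\delta\rho_j+O(N^2a^{-N})$ and $N^2a^{-N}=o(a^{-L})$ under the regular behavior of $L$, it suffices to compute $\rho_{N-1-\ell}$ modulo $O(a^{-L})$.

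I would pass to complex form with $z=ae^{i\xi}$, writing $\varPhi_M(a,\psi)=\mathrm{Re}[e^{i\psi}(z^M-1)/(z-1)]$ and $\varUpsilon_{L,K}(a,\varphi)=\mathrm{Re}[e^{i\varphi}(z^L-1)(z^K-1)/(z-1)^2]$. Replacing $z^M-1$ by $z^M$ and $a^{2M}-1$ by $a^{2M}$ introduces relative errors of order $a^{-M}$, which, after the normalizations $1/\|W_L\|^2$ and $1/\|W_K\|^2$, collect into an overall $O(a^{-L})$. Substituting $k=L-1-m$ in the sums for $\mathbf{J}_1$ and $\mathbf{J}_2$ and using $L+K-1=N$, the inner sums become geometric progressions in $w=e^{i\xi}/a$; the relabeling $n=\ell-m$ in the $\mathbf{J}_2$ sum reveals that $\mathbf{J}_1$ and $\mathbf{J}_2$ have the \emph{same} leading asymptotic $\tfrac{a^2-1}{a(\ell+1)}\,\mathrm{Re}\bigl[e^{i((N-\ell)\xi+\varphi)}(z-1)^{-1}\sum_{m=0}^\ell w^m\bigr]$. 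The $\mathbf{J}_3$ piece reduces to $-(a^2-1)^2 a^{-\ell-2}\,\mathrm{Re}\bigl[e^{i((N+1)\xi+\varphi)}(z-1)^{-2}\bigr]$.

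The final passage to \eqref{eq:rn1ell}--\eqref{eq:Ciell} uses $\mathrm{Re}[A/(z-1)]=\mathrm{Re}[A(\bar z-1)]/|z-1|^2$ together with $|z-1|^2=a^2+1-2a\cos\xi$, which immediately brings out $G(a,\xi)$. For the combined $\mathbf{J}_1+\mathbf{J}_2$, the geometric sum multiplied by $\bar z-1$ telescopes to $ae^{-i\ell\xi}-a^{-\ell}e^{i\xi}$ inside the bracket, and the identifications $(N-\ell)\xi+\varphi-\xi=\psi-\ell\xi$ and $(N-\ell)\xi+\varphi+\ell\xi=\psi+\xi$ with $\psi=(N-1)\xi+\varphi$ produce exactly the first summands of $C_1(\ell)$ and $C_2(\ell)$. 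For $\mathbf{J}_3$, expanding $e^{2i\xi}(\bar z-1)^2=a^2-2ae^{i\xi}+e^{2i\xi}$ and applying the double-angle identities to $\cos(\psi+\xi)$ and $\cos(\psi+2\xi)$ reproduces the second summands of $C_1(\ell)$ and $C_2(\ell)$, carrying the characteristic factor $G(a,\xi)^2 a^{-\ell}$.

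The anticipated obstacle is the bookkeeping rather than any single idea: keeping straight the index translations via $K=N+1-L$, confirming that every approximation $\|W_M\|^2\to a^{2M}/(a^2-1)$ and $z^M-1\to z^M$ contributes at most $O(a^{-L})$, and performing the trigonometric identifications without sign errors. The one structural observation on which the whole computation hinges is the coincidence of the leading terms of $\mathbf{J}_1$ and $\mathbf{J}_2$, which produces the factor $2/(\ell+1)$ rather than $1/(\ell+1)$ in $C_1(\ell)$ and $C_2(\ell)$.
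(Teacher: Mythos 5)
Your plan is correct and follows essentially the same route as the paper's proof: the same decomposition $\rho_{N-1-\ell}=(\mathcal{S}\mathbf{J}_1)_{N-1-\ell}+(\mathcal{S}\mathbf{J}_2)_{N-1-\ell}+(\mathcal{S}\mathbf{J}_3)_{N-1-\ell}$ with the third-branch formulas, the same $O(a^{-L})$ bookkeeping for all the replacements, the key coincidence of the $\mathbf{J}_1$ and $\mathbf{J}_2$ leading terms producing the factor $2/(1+\ell)$, and the asymptotics of $\varUpsilon_{L,K}$ for $\mathbf{J}_3$ carrying $G^2(a,\xi)a^{-\ell}$; you merely organize the algebra through $z=ae^{i\xi}$ where the paper uses the real closed form \eqref{eq:varPhiM} and the auxiliary quantities $P(a,n,\psi)$ and $C(b,S,\psi)$. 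One slip to fix: the telescoped value should be $ae^{-i\xi}-a^{-\ell}e^{i\ell\xi}$ rather than $ae^{-i\ell\xi}-a^{-\ell}e^{i\xi}$, as your own subsequent (correct) phase identifications $\psi-\ell\xi$ and $\psi+\xi$ already indicate.
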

\begin{proof}
For fixed $\xi$ denote $P(a,n,\psi) = a\cos((n - 1)\xi + \psi) - \cos(n\xi + \psi)$.
Straightforward calculations show that
 \begin{gather}
 \label{eq:varPhiM}
\varPhi_M(b, \psi) = \frac{b^{M + 1}\cos((M - 1)\xi + \psi) - b^M \cos(M\xi + \psi) -
b \cos(\xi - \psi) + \cos \psi}{b^2 + 1 - 2b \cos \xi}
\end{gather}
and therefore
\begin{gather}
\varPhi_M(b, \psi)=\frac{b^M P(b,M,\psi)}{b^2 + 1 - 2b\cos\xi}+ O(1)
\label{eq:varPhiL}
\end{gather}
for fixed $b$ and $M\rightarrow \infty$.
%

Denote $\psi_n=(K-\ell - n-1)\xi + \varphi$, then $P(a,L,\psi_n)=P(a,N-n-\ell,\varphi)$ and
\begin{gather*}
\sum_{n = 0}^{\ell}a^{-n} P(a,N-n-\ell,\varphi)=a\cos((N - 1 - \ell)\xi + \varphi) - \cos(N\xi + \varphi)a^{-\ell}.
\end{gather*}

Therefore, taking into account that $K\leq j=N-1-\ell$ and applying \eqref{eq:varPhiL} with $b=a$, $M=L$
and $\psi=\psi_n$,
 we get from \eqref{eq:SWWE}
%
\begin{gather*}
\frac{\big(\mathcal{S}(W_LW_L^{\rm T}{\bf E})\big)_{N - 1 - \ell}}{\|W_L\|^2} =
\frac{1}{\|W_L\|^2}\,\frac{1}{\ell+1} \sum_{k = N-K-\ell}^{N - K}a^{k}\varPhi_L(a, (N-1-\ell - k)\xi + \varphi)=
\\
\sum_{n = 0}^{\ell} a^{N-K-n} \varPhi_L(a, (K-\ell - n-1)\xi + \varphi)\ \frac{a^2-1}{(\ell+1)(a^{2L}+1)}=\\
\frac{a^2-1}{\ell+1}\ \sum_{n = 0}^{\ell} a^{L-1-n}\left(\frac{a^L P(a,L,\psi_n)}{a^2 + 1 - 2a\cos\xi}+ O_L(1)\right)
\frac{1}{a^{2L}-1}=
\\
\frac{a^2-1}{a(\ell+1)(a^2 + 1 - 2a\cos\xi)}\sum_{n = 0}^{\ell}a^{-n} P(a,N-n-\ell,\varphi)+O(a^{-L})=\\
= \frac{G(a, \xi)}{1 + \ell}\bigg(a\cos((N - 1 - \ell)\xi + \varphi) - \cos(N\xi + \varphi)a^{-\ell}\bigg) + O(a^{-L}).
\end{gather*}

In the same manner,
\begin{gather*}
\frac{(\mathcal{S}{\bf E}W_KW_K^{\rm T})_{N - 1 - \ell}}{\|W_K\|^2}
=\frac{G(a, \xi)}{1 + \ell}\bigg(a\cos((N - 1 - \ell)\xi + \varphi) - \cos(N\xi + \varphi)a^{-\ell}\bigg) + O(a^{-K}).
\end{gather*}

Now we pass to  $\varUpsilon_{L,K}(a, \varphi)$ which is defined in \eqref{eq:upsilon}.
It easy to check that
if $b>1$ and $T, M \rightarrow \infty$ then
\begin{gather*}
\varUpsilon_{T, M}(b, \psi) = \frac{b^{S + 1}}{(b^2 + 1 - 2b\cos\xi)^2}\ C(b,S,\psi)+ O(b^{\min\{T, M\}}),
\end{gather*}
where $S = T + M - 1$ and
\begin{gather*}
C(b,S,\psi)=b^2\cos((S - 1)\xi + \psi) - 2b\cos(S\xi + \psi) + \cos((S + 1)\xi + \psi).
\end{gather*}

Therefore,
\begin{gather*}
\frac{a^{N - 1 - \ell}\varUpsilon_{L,K}(a, \varphi)}{\|W_L\|^2\|W_K\|^2} = \frac{a^{2N -\ell} C(a,N,\varphi)}{(a^2 + 1 - 2a\cos\xi)^2}\
\frac{(a^2 - 1)^2}{a^{2N+2}(1-a^{-2L})(1-a^{-2K})} + O(a^{-L}) =\\
=\frac{G^2(a, \xi)}{a^{\ell}}\ C(a,N,\varphi)
+ O(a^{-L}).
\end{gather*}
Since
\begin{gather*}
a\cos((N - 1 - \ell)\xi + \varphi) - \cos(N\xi + \varphi)a^{-\ell}=\\
\big(a\cos(\ell\xi)\!-\! a^{-\ell}\cos\xi\big)\cos((N\!-\!1)\xi\!+\!\varphi)\!+\!\big(a\sin(\ell\xi)\!+
\!a^{-\ell}\sin\xi \big)\sin((N\!-\!1)\xi\!+\!\varphi)
\end{gather*}
and
\begin{gather*}
C(a,N,\varphi)=(a^2 - 2a\cos\xi + \cos(2\xi))\cos((N\!-\!1)\xi\!+\!\varphi) + 2\sin\xi(a - \cos\xi)\sin((N\!-\!1)\xi\!+\!\varphi),
\end{gather*}
we get the result in view of \eqref{eq:rj_gen}.
\end{proof}

To analyze the behavior of the right-hand side of  \eqref{eq:rn1ell}  more precisely,
we need some more considerations.

First of all, it is worth to mention, that  $C_1(\ell)C_2(\ell)\neq 0$ for any fixed $\ell$.
Thus we can rewrite the result of Proposition \ref{prop:last_indices} in the form
\begin{gather}
\label{eq:last_indices1}
r_{N - 1 - \ell} = \delta F_N(\ell) + O(a^{-L}),
\end{gather}
where
\begin{gather}
\label{eq:FN}
F_N(\ell)=D(\ell)\sin\big((N - 1)\xi + \varphi_1(\ell)\big)
\end{gather}
with
\begin{gather*}
D(\ell)=\frac{a^2 - 1}{a(a^2 + 1 - 2a\cos\xi)}\ \sqrt{C_1^2(\ell) + C_2^2(\ell)} \quad \text{and}\\
\varphi_1(\ell) = {\rm arccos}\left(\frac{C_2(\ell)}{\sqrt{C_1^2(\ell) + C_2^2(\ell)}}\right) + \varphi.
\end{gather*}

Remind that $\xi=2\pi \omega$ with
$\omega\in (0,1/2)$. It is natural that the asymptotic behavior of $r_{N - 1 - \ell}$ depends on the properties of
the frequency $\omega$.

Suppose that $\omega = p/q$, where  $p$ and $q$ are coprime natural numbers. For fixed $0\leq k<q$ and $\ell\geq 0$  consider the sequence
$N_m^{(k)}=mq+k+1, \ m\geq 1$.
Since
\begin{gather*}
\sin\big(2\pi(N_m^{(k)} - 1)p/q + \varphi_1(\ell)\big)=\sin\big(2\pi k p/q + \varphi_1(\ell)\big),
\end{gather*}
then
\begin{gather}
\label{eq:lim_r_j}
r_{N_m^{(k)} - 1 - \ell} \rightarrow D(\ell)\sin\big(2\pi kp/q + \varphi_1(\ell)\big) \quad \text{as} \ \   m\rightarrow \infty.
\end{gather}
%

\begin{proposition}
\label{prop:limpoints_rat}
Let the conditions of Proposition \ref{prop:last_indices} be fulfilled. Assume that $\ell$ is fixed and that $\omega$ is a rational number.
Denote $\tau$ the number of limit points  of the series $r_{N-1-\ell}$ as $N\rightarrow \infty$.
Then $\tau\geq 2$.
\end{proposition}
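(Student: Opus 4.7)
The plan is to leverage the closed form already extracted in \eqref{eq:lim_r_j}: for each $k\in\{0,1,\dots,q-1\}$ the subsequence $r_{N_m^{(k)}-1-\ell}$ converges to $L_k:=D(\ell)\sin(2\pi kp/q+\varphi_1(\ell))$ as $m\to\infty$, so each $L_k$ is an accumulation point of the series $(r_{N-1-\ell})_{N\ge 1}$. It therefore suffices to show that the finite set $\{L_0,L_1,\dots,L_{q-1}\}$ contains at least two distinct elements.

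I would begin by pinning down $q$. Since $\omega=p/q\in(0,1/2)$ with $\gcd(p,q)=1$, the strict inequality $p/q<1/2$ gives $q\ge 2p+1\ge 3$. Moreover, $\gcd(p,q)=1$ implies that $k\mapsto kp\pmod q$ is a bijection of $\{0,1,\dots,q-1\}$, so the phases $\theta_k:=2\pi kp/q+\varphi_1(\ell)\pmod{2\pi}$ are $q\ge 3$ pairwise distinct points of $[0,2\pi)$. Because the equation $\sin\theta=c$ has at most two solutions in $[0,2\pi)$ for any $c\in[-1,1]$, the $q\ge 3$ numbers $\sin\theta_k$ cannot all coincide; hence at least two distinct values appear among $\{\sin\theta_k\}_{k=0}^{q-1}$.

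To finish, I would invoke the remark preceding the proposition that $C_1(\ell)C_2(\ell)\ne 0$, which together with \eqref{eq:FN} forces $D(\ell)\ne 0$ (the prefactor $(a^2-1)/\bigl(a(a^2+1-2a\cos\xi)\bigr)$ is positive because $a>1$ and $a^2+1-2a\cos\xi\ge(a-1)^2>0$, so $D(\ell)=0$ would require $C_1(\ell)=C_2(\ell)=0$). Multiplying the sine values by this nonzero constant preserves the count of distinct values, so $\{L_0,\dots,L_{q-1}\}$ contains at least two elements and $\tau\ge 2$.

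I do not foresee a substantive obstacle here; the only minor point is to justify $D(\ell)\ne 0$, and that has been reduced to the non-vanishing assertion $C_1(\ell)C_2(\ell)\ne 0$ already recorded in the text. A fully self-contained verification from the explicit expressions in \eqref{eq:Ciell} would be a routine trigonometric computation, but given the preceding remark it seems unnecessary to reproduce it inside the proof.
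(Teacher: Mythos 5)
Your proposal is correct, and it reaches the conclusion by a genuinely different and more elementary route than the paper. Both arguments reduce to showing that the $q$ candidate limit values $L_k=D(\ell)\sin(2\pi kp/q+\varphi_1(\ell))$, each of which is an accumulation point by \eqref{eq:lim_r_j}, cannot all coincide, and both use $D(\ell)>0$ via the remark that $C_1(\ell)C_2(\ell)\neq 0$ together with positivity of $G(a,\xi)$. The paper proceeds by contradiction: assuming $\sin(2\pi kp/q+\varphi_1(\ell))=s$ for all $k$, it writes each phase as $(-1)^{m_k}\arcsin s+m_k\pi$, computes the constant increment $2\pi p/q\in(0,\pi)$ as a difference $\Delta(s,m_k,m_{k+1})$, and rules out all parity patterns of consecutive $m_k$ once $q>2$. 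You instead observe that $\gcd(p,q)=1$ makes the $q$ phases pairwise distinct modulo $2\pi$, that $\omega\in(0,1/2)$ forces $q\geq 2p+1\geq 3$, and that a level set $\{\theta\in[0,2\pi):\sin\theta=c\}$ has at most two points, so three or more distinct phases cannot share one sine value. Your counting argument avoids the $\arcsin$ representation and the case analysis on parities entirely, is shorter, and generalizes immediately (it bounds the number of coincidences among all $q$ values, not just consecutive ones); the paper's argument, on the other hand, only needs the single increment $2\pi p/q\in(0,\pi)$ and three consecutive indices, so it would survive even if one only knew the phases along consecutive $k$. The one point you defer, $C_1(\ell)C_2(\ell)\neq 0$, is likewise taken as given in the paper, so no gap arises from it.
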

\begin{proof}

Since $D(\ell)>0$, it is sufficient to examine the expressions
$
S(k)
=\sin(2\pi kp/q  + \varphi_1(\ell))
$
with $0\leq k<q$.
If $S(k)=s={\rm const}$ for all $k$, then there exist integers $m_k$ such that
\begin{gather*}
2\pi k p/q + \varphi_1(\ell) = (-1)^{m_k}\arcsin s  + m_k\pi, \quad k=0,\ldots,q-1.
\end{gather*}
Therefore, for $0\leq k<q-1$
\begin{gather*}
2\pi p/q
= \big((-1)^{m_{k+1}} -(-1)^{m_k}\big)\arcsin s + \big(m_{k + 1} - m_{k}\big)\pi=\\
\Delta(s,m_k,m_{k+1})\stackrel{\rm def}=
\begin{cases}
2\arcsin s + (m_{k + 1} - m_{k})\pi &\text{for even}\  m_{k + 1} \text{ and odd}\  m_{k},\\
-2\arcsin s + (m_{k + 1} - m_{k})\pi &\text{for odd}\  m_{k + 1} \text{ and even}\ m_{k},\\
(m_{k + 1} - m_{k})\pi, &\text{if}\ m_{k + 1}, m_{k}\ \text{are both odd or even}.
\end{cases}
\end{gather*}
Since $0<2\pi p/q<\pi$, then we immediately come to the inequality $s\neq 0$. Suppose now that $\arcsin s>0$. (The case of $\arcsin s<0$ can be
treated in the same manner.) Then $\Delta(s,m_k,m_{k+1})\in (0,\pi)$ iff $m_{k + 1} - m_{k}=1$ and $m_{k+1}$ is odd.

Therefore, for any sequence   $\{m_k\}_{k = 0}^{q - 1}$ with $q > 2$ there exist a pair $(m_{k+1},m_k)$ with $\Delta(s,m_k,m_{k+1})\notin (0,\pi)$, and the
assertion is proved.
\end{proof}

\begin{figure}[h!]
\centering
\includegraphics[width=.7\linewidth]{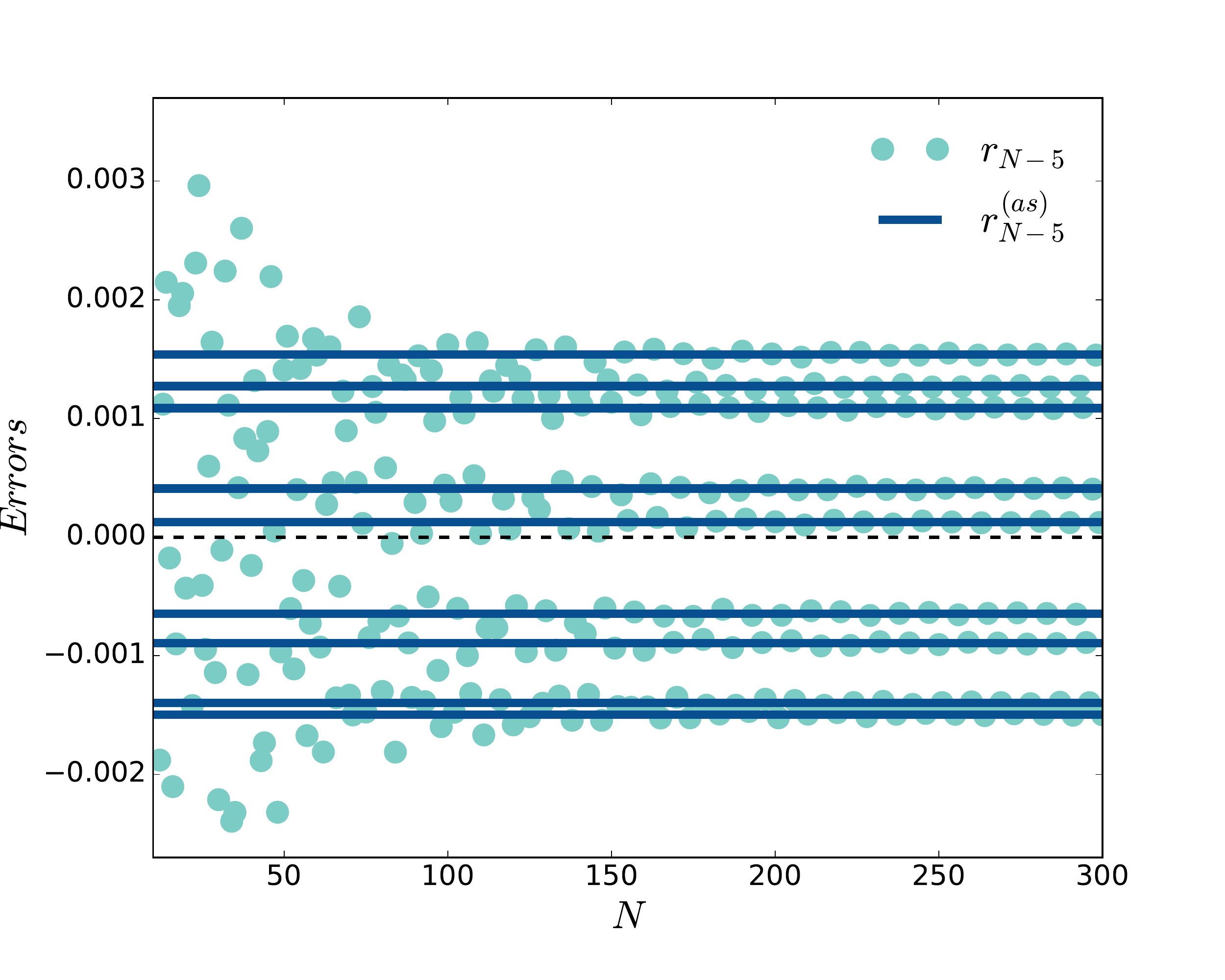}
\label{fig:omega=2_9_i=N-5}
\caption{Reconstruction errors $r_{N-5}$ and their limit values $r^{(as)}_{N-5}$
 for $\omega = 2/9$, $L = \lfloor 0.35 N\rfloor$, $a = 1.05$, $\delta = 0.1$, $\varphi = 0$ and $10 \leq N \leq 300$.}
\label{fig:omega_2_9_i_last}
\end{figure}
The convergence \eqref{eq:lim_r_j} and Proposition \ref{prop:limpoints_rat} are illustrated by Fig. \ref{fig:omega_2_9_i_last}.
To investigate the case of the irrational $\omega$ we use the following famous  equidistribution theorem  going
 back to P. Bohl \cite{Bohl_1909} and W. Sierpinski \cite{Sierpinski_1910}.
\begin{theorem}
\label{theor:BSW}
If $\alpha\in (0,1)$ is irrational, then the sequence  $z_n=\{n\alpha\}$ is uniformly distributed on $[0,1]$ in the sense that for any
$0\leq a< b\leq 1$
\begin{gather}
\label{eq:equidistr}
\frac{1}{n} \sum_{i=1}^{n} {\bf 1}_{[a,b)}(z_i)\rightarrow b-a, \quad n\rightarrow \infty.
\end{gather}

\end{theorem}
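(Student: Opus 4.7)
The plan is to prove the theorem via Weyl's criterion, which asserts that a sequence $(z_n)\subset[0,1]$ is uniformly distributed modulo $1$ if and only if
\begin{gather*}
\frac{1}{n}\sum_{i=1}^{n} e^{2\pi \mathrm{i} k z_i}\longrightarrow 0 \quad \text{as}\ n\rightarrow \infty
\end{gather*}
for every non-zero integer $k$. I would first establish this equivalence, and then verify the criterion for $z_i=\{i\alpha\}$ using the fact that $\alpha\notin\mathbb{Q}$.

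For the verification, observe that since $\{i\alpha\}$ and $i\alpha$ differ by an integer, $e^{2\pi\mathrm{i}kz_i}=e^{2\pi\mathrm{i}ki\alpha}$, and the partial sum is just a geometric progression with ratio $q=e^{2\pi\mathrm{i}k\alpha}$. Because $\alpha$ is irrational, $k\alpha\notin\mathbb{Z}$ for any $k\neq 0$, so $q\neq 1$ and
\begin{gather*}
\left|\sum_{i=1}^{n} e^{2\pi\mathrm{i}ki\alpha}\right|=\left|q\,\frac{q^n-1}{q-1}\right|\leq \frac{2}{|1-e^{2\pi\mathrm{i}k\alpha}|},
\end{gather*}
which is bounded independently of $n$. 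Dividing by $n$ yields the Weyl criterion at once.

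It remains to deduce the equidistribution statement \eqref{eq:equidistr} from the Weyl criterion. The idea is the standard three-step approximation. First, by Fej\'er's theorem (or direct use of the Stone--Weierstrass theorem), trigonometric polynomials are uniformly dense in the space of continuous $1$-periodic functions, so the Weyl criterion upgrades to the Riemann-type statement
\begin{gather*}
\frac{1}{n}\sum_{i=1}^{n}f(z_i)\longrightarrow \int_0^1 f(x)\,dx
\end{gather*}
for every continuous $1$-periodic $f$. Second, given $0\leq a<b\leq 1$ and $\varepsilon>0$, I would construct two continuous $1$-periodic functions $f_-,f_+$ satisfying $f_-\leq \mathbf{1}_{[a,b)}\leq f_+$ with $\int_0^1(f_+-f_-)\,dx<\varepsilon$ (linear ramps of width $\varepsilon/2$ at the endpoints). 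Sandwiching
\begin{gather*}
\frac{1}{n}\sum_{i=1}^{n}f_-(z_i)\leq \frac{1}{n}\sum_{i=1}^{n}\mathbf{1}_{[a,b)}(z_i)\leq \frac{1}{n}\sum_{i=1}^{n}f_+(z_i)
\end{gather*}
and letting $n\to\infty$ gives $b-a-\varepsilon$ and $b-a+\varepsilon$ as the asymptotic bounds; sending $\varepsilon\downarrow 0$ finishes the proof.

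The main conceptual step is the Weyl criterion itself (the equivalence between uniform distribution and the vanishing of exponential averages); the verification for $z_i=\{i\alpha\}$ is straightforward once the irrationality of $\alpha$ is used to keep $e^{2\pi\mathrm{i}k\alpha}$ off the unit, and the passage from continuous test functions to indicators of intervals is a routine sandwich argument.
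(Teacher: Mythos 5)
Your proof is correct, but note that the paper does not prove this statement at all: Theorem \ref{theor:BSW} is quoted as the classical equidistribution theorem, with the references to Bohl and Sierpinski serving in place of a proof, and it is then used as a black box in Proposition \ref{prop:r_irrat}. What you have written is the standard self-contained argument via Weyl's criterion, and it is sound: for irrational $\alpha$ and $k\neq 0$ one has $e^{2\pi\mathrm{i}k\alpha}\neq 1$, so the exponential sums are bounded by $2/\lvert 1-e^{2\pi\mathrm{i}k\alpha}\rvert$ uniformly in $n$; Fej\'er (or Stone--Weierstrass) density of trigonometric polynomials upgrades this to convergence of averages of continuous $1$-periodic test functions, and the sandwich with continuous ramps passes to indicators of intervals. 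Two small remarks: you only need the easy direction of Weyl's criterion (vanishing exponential averages imply uniform distribution), so there is no need to ``establish the equivalence''; and in the sandwich step the boundary cases $a=0$ or $b=1$ require the usual minor care in building $1$-periodic majorants and minorants (harmless here, since the $z_i$ lie in $[0,1)$). In short, your proposal supplies a complete elementary proof of a result the paper deliberately imports from the literature.
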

%

\begin{proposition}
\label{prop:r_irrat}
Let $\omega\in (0,1/2)$ be the irrational number and assume $\ell\geq 0$ to be fixed. Then for any $(a,b)\subset [D(-\ell), D(\ell)]$ and any $\delta\neq 0$
\begin{gather*}
\frac{1}{N} \sum_{n=\ell+1}^{N+\ell} {\bf 1}_{(a,b)}(r_{n-1-\ell}/\delta)\rightarrow \int_a^b \frac{1}{\pi \sqrt{D^2(\ell)-u^2}}\ du
\quad \text{as} \ \ N\rightarrow \infty.
\end{gather*}
\end{proposition}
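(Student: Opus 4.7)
The plan is to reduce Proposition \ref{prop:r_irrat} to the Bohl--Sierpinski--Weyl equidistribution theorem (Theorem \ref{theor:BSW}) through the asymptotic representation \eqref{eq:last_indices1}--\eqref{eq:FN}. That representation gives, for $n \geq \ell+1$,
\begin{gather*}
r_{n-1-\ell}/\delta = g_n + \varepsilon_n, \qquad g_n := D(\ell)\sin\bigl((n-1)\xi + \varphi_1(\ell)\bigr),
\end{gather*}
with $\varepsilon_n = O(a^{-L(n)}) \to 0$ since $L(n) \sim \alpha n \to \infty$. Hence the empirical distribution of $r_{n-1-\ell}/\delta$ should coincide asymptotically with that of the purely sinusoidal sequence $g_n$.

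First I would establish the conclusion for $g_n$ alone. Since $\omega = \xi/(2\pi)$ is irrational, Theorem \ref{theor:BSW} applied to $z_n = \{(n-1)\omega\}$ says that $z_n$ is uniformly distributed on $[0,1)$. The standard approximation of bounded Riemann-integrable functions by step functions upgrades \eqref{eq:equidistr} to
\begin{gather*}
\frac{1}{N}\sum_{n=\ell+1}^{N+\ell} h(z_n) \;\to\; \int_0^1 h(x)\,dx
\end{gather*}
for every bounded Riemann-integrable $h$ on $[0,1]$. Take $h(x) = \mathbf{1}_{(a,b)}\bigl(D(\ell)\sin(2\pi x + \varphi_1(\ell))\bigr)$; its set of discontinuities has Lebesgue measure zero when $a,b \in (-D(\ell), D(\ell))$, so it is Riemann-integrable. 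Substituting $\theta = 2\pi x + \varphi_1(\ell) \pmod{2\pi}$ and then $u = D(\ell)\sin\theta$ reduces the right-hand side to the arcsine integral $\int_a^b du/(\pi\sqrt{D^2(\ell)-u^2})$.

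Finally I would absorb the vanishing perturbation $\varepsilon_n$. For arbitrary $\eta > 0$ pick $n_0$ with $|\varepsilon_n| < \eta$ for $n \geq n_0$; then
\begin{gather*}
\mathbf{1}_{(a+\eta,\,b-\eta)}(g_n) \leq \mathbf{1}_{(a,b)}(r_{n-1-\ell}/\delta) \leq \mathbf{1}_{(a-\eta,\,b+\eta)}(g_n)
\end{gather*}
for $n \geq n_0$, while the $n < n_0$ terms contribute at most $n_0/N \to 0$ to the Cesaro average. Applying the previous step to both bracketing sums and sending $\eta \downarrow 0$, continuity of the arcsine distribution function in the interior $(-D(\ell), D(\ell))$ completes the proof. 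The main, albeit minor, obstacle is precisely this bookkeeping: the estimate on $\varepsilon_n$ is numerical rather than uniform in $n$, so one cannot simply replace $r_{n-1-\ell}/\delta$ by $g_n$ inside the indicator, and the sandwich together with absolute continuity of the arcsine law away from $\pm D(\ell)$ is what legitimizes the substitution.
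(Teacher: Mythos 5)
Your proof is correct and follows essentially the same route as the paper: reduce to the equidistribution of $\{(n-1)\omega\}$ (Theorem \ref{theor:BSW}), push that uniform limit forward under $z\mapsto D(\ell)\sin(2\pi z+\varphi_1(\ell))$ to obtain the arcsine density \eqref{eq:theor_dens}, and then discard the $O(a^{-L})$ remainder from \eqref{eq:last_indices1}. The only differences are cosmetic: the paper phrases the pushforward step probabilistically via Billingsley's Mapping Theorem while you use the equivalent Weyl-criterion formulation with Riemann-integrable test functions, and your $\eta$-bracketing of the vanishing error term spells out carefully what the paper dispatches with the single sentence ``In view of \eqref{eq:last_indices1} the assertion is proved.''
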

\begin{proof}
In terms of the weak convergence
of distributions (see \cite{Billingsley_1999} for the whole theory), the convergence \eqref{eq:equidistr} means that
$\mathcal{P}_n\Rightarrow \mathrm{U}(0,1)$ as $n\rightarrow \infty$, where $\mathcal{P}_n$ stands for the uniform
 distribution on the set $\{z_{1},\ldots, z_{n}\}$, $\mathrm{U}(0,1)$ is the uniform distribution on $[0,1]$, and $``\Rightarrow''$ is the sign of the weak convergence.

 Now let us consider the sequence $\{\beta_n\}_{n\geq 1}$ of random variables defined on a certain probability space $(\Omega, \mathcal{F}, \mathrm{P})$
 and such that $\mathcal{L}(\beta_n)=\mathcal{P}_n$ for any $n$. (Note that here and further $\mathcal{L}(\beta)$ stands for the distribution of the
 random variable $\beta$.) Then \eqref{eq:equidistr} can be rewritten as $\mathcal{L}(\beta_n)\Rightarrow \mathcal{L}(\upsilon)$ as $n\rightarrow \infty$,
  where  $\upsilon\in \mathrm{U}(0,1)$.

 According to the Mapping Theorem \cite[theor. 2.7]{Billingsley_1999}, if $n\rightarrow \infty$ then
 \begin{gather*}
\mathcal{L}\big(h(\beta_n)\big)\Rightarrow \mathcal{L}\big(h(\upsilon)\big)
 \end{gather*}
 with
$
 h(z)=D(\ell)\sin\big(2\pi z+\varphi_1(\ell)\big).
$
 Standard calculations  show that the random variable $\eta=h(\upsilon)$ has the probability density
 \begin{gather}
 \label{eq:theor_dens}
p_{\eta}(z) = \frac{\mathbf{1}_{[-D(\ell), D(\ell)]}(z)}{\pi\sqrt{D(\ell)^2 - z^2}}\ ,
\end{gather}
where $\mathbf{1}_{A}(x)$ stands for the indicator function of the set $A$.

Since $\sin(2\pi j\omega+\phi)=\sin\big(2\pi \{j\omega\}+\phi\big)$ for any integer $j\geq 1$,
this means that for any $a<b$
\begin{gather*}
\frac{1}{N} \sum_{n=\ell+1}^{N+\ell} {\bf 1}_{(a,b)}\big(F_{n}(\ell)\big)\rightarrow \int_a^b p_\eta(u) du,
\end{gather*}
where $F_N(\ell)$ is defined in  \eqref{eq:FN} and $N\rightarrow \infty$.
In view of \eqref{eq:last_indices1} the assertion is proved.
\end{proof}

The result of Proposition \ref{prop:r_irrat} is illustrated by Fig. \ref{fig:omega_sqrt2_6_density}.
\begin{figure}[h]
\centering
\includegraphics[width=.7\linewidth]{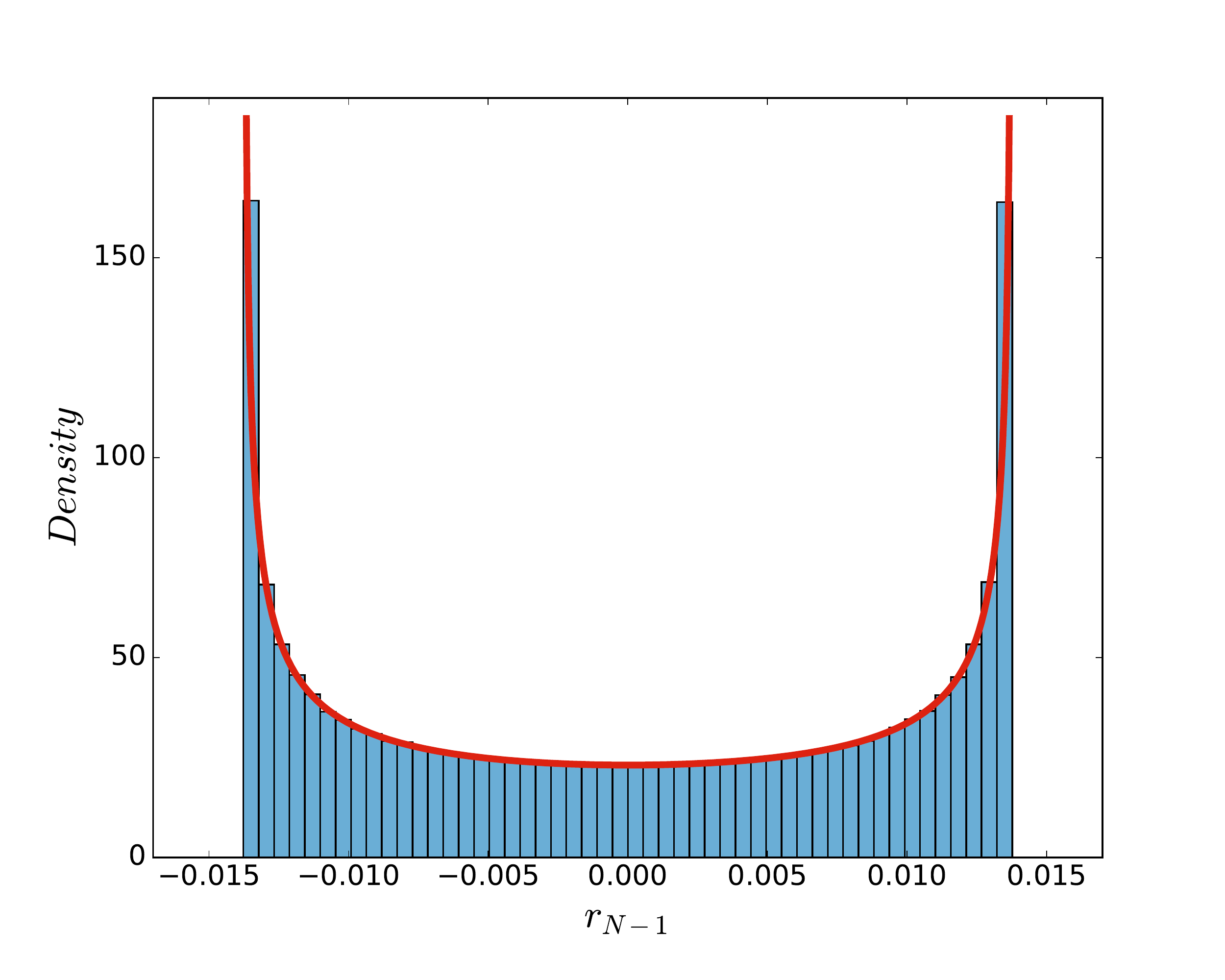}
\caption{Reconstruction errors $r_{N - 1}$ for $\omega = \sqrt{2}/6$, $L = \lfloor 0.35 N\rfloor$, $a = 1.05$, $\delta = 0.1$, $\varphi = 0$ and $10^3 \leq N \leq 10^6$. The histogram and the theoretical density  \eqref{eq:theor_dens}.}
\label{fig:omega_sqrt2_6_density}
\end{figure}

\begin{remark}
Propositions \ref{prop:limpoints_rat} and \ref{prop:r_irrat} show that for fixed $\ell$ and any $\omega\in (0,1/2)$ the reconstruction error $r_{N - 1 - \ell}$ does not converge to any limit value as $L/N\rightarrow \alpha\in (0,1)$. The case $\omega = 1/2$ can be studied in the same manner and gives the analogous result, while the exponential signal and the constant noise (i.e., the case $\omega=0$) is already checked in \cite{Nekrutkin10}.
\end{remark}


\section{Reconstruction errors for the signal $x_n=a^{nT/N}$}
\label{sect:rec_ant}
Now we deal with the discretization of the exponential signal, described in the Introduction. More precisely, we consider the constant $T>0$ and the
triangular  array of the series
\begin{gather}
\label{eq:discr_sign_noise}
f_n=f_n^{(N)}=a^{nT/N}+\delta\cos(\xi n+\varphi), \quad n=0,\ldots, N-1, \quad N=1,2,\ldots
\end{gather}
under the assumption that $N\rightarrow \infty$ and $L\sim \alpha N$ with $\alpha \in (0,1)$.
As in the previous section, we suppose that $a>1$, $\xi\in (0,\pi)$ and $\varphi\in [0,2\pi)$.

Of course, all formulas borrowed from \cite{Nekrutkin10} are valid here. Moreover, we can use all general formulas of Section \ref{sect:rec_an} if we put
$a^{jT/N}$ instead of $a^j$.
For example, now  we put
$W_j=\big(1,a^{T/N},\ldots,a^{(j-1)T/N}\big)^{\rm T}$ instead of denotation $W_j=\big(1,a,\ldots,a^{j-1}\big)^{\rm T}$ that was used in  Section \ref{sect:rec_an}.

In particular, since ${\rm rank}\, {\bf H}=1$, then the unique positive eigenvalue $\mu$ of the matrix ${\bf H} {\bf H}^{\rm T}$ has the form
\begin{gather}
\label{eq:exp_muN}
\mu=\big\|W_L\big\|^2\,\big\|W_K\big\|^2=
\displaystyle{\frac{(a^{2LT/N}-1) (a^{2KT/N}-1)}{(a^{2T/N}-1)^2}}\ .
\end{gather}

To investigate the discretization case we apply two general inequalities demonstrated in \cite{Nekrutkin10}. Here we put these statements in the form
adapted to
our problem.
Denote
\begin{gather}
\label{eq:Bdelta}
{\bf B}(\delta)=\delta\big({\bf H} {\bf E}^{\rm T}+{\bf E}{\bf H}^{\rm T}\big)+\delta^2{\bf E} {\bf E}^{\rm T}=
\delta {\bf A}^{(1)} + \delta^2{\bf A}^{(2)}
\end{gather}
 and let $\mu$ be defined by \eqref{eq:exp_muN}.

\begin{theorem}{\rm (\cite[theor. 2.3]{Nekrutkin10})}.
\label{theor:gen_upper}
If $\delta_0>0$ and $\big\|{\bf B}(\delta)\big\|/\mu<1/4$ for any $\delta\in (-\delta_0,\delta_0)$, then
\begin{gather*}
\big\|{\bf P}_0^{\perp}(\delta)-{\bf P}_0^{\perp}\big\|\leq
4C\ \frac{\|{\bf S}_0{\bf B}(\delta){\bf P}_0\|}{1-4\|{\bf B}(\delta)\|/\mu}\quad \text{with} \ \ C=e^{1/6}/\sqrt\pi.
\end{gather*}
\end{theorem}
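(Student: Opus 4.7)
The statement is a Kato-type perturbation bound, so the natural approach is via the Riesz projection formula combined with a Neumann expansion of the perturbed resolvent. Writing $\mathbf{A}(\delta) = \mathbf{A} + \mathbf{B}(\delta)$ with $\mathbf{A} = \mathbf{H}\mathbf{H}^{\rm T}$, in the discretization setting $\mathbf{A}$ has the unique positive eigenvalue $\mu$ (the same argument works with minor changes when $d>1$). I would choose a positively oriented circle $\Gamma$ enclosing $\mu$ but excluding $0$, for instance the one centered at $\mu$ of radius $\mu/2$. On $\Gamma$ one has $\|(\mathbf{A}-z\mathbf{I})^{-1}\| \leq 2/\mu$, and the hypothesis $\|\mathbf{B}(\delta)\|/\mu < 1/4$ forces $\|\mathbf{B}(\delta)(\mathbf{A}-z\mathbf{I})^{-1}\| \leq 1/2$. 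This guarantees that $\Gamma$ also encloses the positive spectrum of $\mathbf{A}(\delta)$, so the difference of projections admits the representation $\mathbf{P}_0^{\perp}(\delta) - \mathbf{P}_0^{\perp} = -\frac{1}{2\pi i}\oint_\Gamma [(\mathbf{A}(\delta)-z\mathbf{I})^{-1} - (\mathbf{A}-z\mathbf{I})^{-1}]\,dz$.

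Next I would expand the perturbed resolvent as a Neumann series around $(\mathbf{A}-z\mathbf{I})^{-1}$, whose convergence on $\Gamma$ is ensured by the bound above. The unperturbed term cancels, leaving a sum over $k \geq 1$ of contour integrals of alternating products of $\mathbf{B}(\delta)$ and $(\mathbf{A}-z\mathbf{I})^{-1}$. The key structural step is the block decomposition $(\mathbf{A}-z\mathbf{I})^{-1} = -z^{-1}\mathbf{P}_0 + (\mu - z)^{-1}\mathbf{P}_0^{\perp}$ (more invariantly, $-z^{-1}\mathbf{P}_0 + \sum_i (\lambda_i - z)^{-1}\mathbf{P}_i$ in the higher-rank case, which encodes $\mathbf{S}_0$ after residue evaluation). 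Substituting into the $k$-fold product yields $2^{k+1}$ summands indexed by the ``$\mathbf{P}_0$ vs. $\mathbf{P}_0^{\perp}$'' pattern of the inserted projections, each multiplied by a rational function of $z$ whose residues inside $\Gamma$ are immediate.

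Every surviving summand must contain at least one ``switch'' block $\mathbf{S}_0\mathbf{B}(\delta)\mathbf{P}_0$ or $\mathbf{P}_0\mathbf{B}(\delta)\mathbf{S}_0$: otherwise the integrand is either holomorphic inside $\Gamma$ or reduces to a pure $\mathbf{P}_0^{\perp}$-block that is cancelled by $\mathbf{P}_0^{\perp}$ itself on subtraction. Hence each contributing summand carries the factor $\|\mathbf{S}_0\mathbf{B}(\delta)\mathbf{P}_0\|$ together with $k-1$ additional factors of size $\|\mathbf{B}(\delta)\|/\mu$. Collecting terms by the binomial count of switch positions and summing over $k$ collapses to a geometric series in $4\|\mathbf{B}(\delta)\|/\mu$, which produces the denominator $1 - 4\|\mathbf{B}(\delta)\|/\mu$. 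The constant $C = e^{1/6}/\sqrt{\pi}$ arises from the Stirling-type bound $\binom{2k}{k} \leq 4^k e^{1/(6k)}/\sqrt{\pi k}$ applied to the central binomial coefficients that dominate the pattern counts.

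The main obstacle I expect is precisely the combinatorial bookkeeping of the third step: one must identify, for each product in the Neumann expansion, which switch produces the desired $\|\mathbf{S}_0\mathbf{B}(\delta)\mathbf{P}_0\|$ factor and which slots only contribute the uniform $\|\mathbf{B}(\delta)\|/\mu$, so that no bare $\|\mathbf{B}(\delta)\|$ unscaled by $\mu^{-1}$ survives. Once this identification is carried out, the residue calculation and the geometric summation are standard; the factor $4$ in $4C$ tracks the $2/\mu$ resolvent bound on $\Gamma$ together with the symmetric pair of switch orientations, while the refined constant $C$ is the distilled Stirling estimate.
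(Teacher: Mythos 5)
This statement is not proved in the paper at all: it is imported verbatim (in adapted notation) from \cite{Nekrutkin10}, so there is no in-paper argument to compare against. Your sketch reconstructs essentially the same Kato-type route used in that source — the Riesz-projection contour representation, the Neumann/perturbation series whose order-$p$ terms are products of $\mathbf{B}(\delta)$ with powers of $\mathbf{S}_0$ and the projections, the observation that every non-vanishing term carries a ``switch'' factor bounded by $\|\mathbf{S}_0\mathbf{B}(\delta)\mathbf{P}_0\|$ with the remaining slots contributing $\|\mathbf{B}(\delta)\|/\mu$, the central-binomial count $\binom{2p}{p}\le 4^p e^{1/6}/\sqrt{\pi}$ yielding $C=e^{1/6}/\sqrt{\pi}$, and the geometric summation giving $4C/(1-4\|\mathbf{B}(\delta)\|/\mu)$ — so the proposal is consistent with the cited proof, modulo the combinatorial bookkeeping you yourself flag as not yet carried out.
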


\begin{theorem}{\rm (\cite[theor. 2.5]{Nekrutkin10})}
\label{theor:L_delta_proj}
Put $B(\delta)=|\delta|\, \|{\bf A}^{(1)}\|+\delta^2 \|{\bf A}^{(2)}\|$ and
assume that $\delta_0>0$, $B(\delta_0)=\mu/4$ and $|\delta|< \delta_0$. Denote ${\bf A}_0^{(2)} = {\bf P}_0{\bf A}^{(2)}{\bf P}_0$,
then $\|\delta{\bf A}_0^{(2)} \|<1$ and the matrix ${\bf I} - \delta{\bf A}_0^{(2)}$ is invertible.

Besides, under denotation ${\bf L}(\delta) = {\bf L}_1(\delta) + {\bf L}_1^{\rm T}(\delta)$ with
\begin{gather}
\label{eq:L1}
{\bf L}_1(\delta) = \frac{{\bf P}_{0}^{\perp}{\bf B}(\delta){\bf P}_0}{\mu} \Big({\bf I} - \delta{\bf A}_0^{(2)}/\mu\Big)^{-1},
\end{gather}
 the inequality
\begin{gather}
\big\|{\bf P}_0^{\perp}(\delta)-{\bf P}_0^{\perp}-{\bf L}(\delta)\big\|
\leq 16\,C
\frac{\|{\bf S}_0{\bf B}(\delta)\|\|{\bf S}_0{\bf B}(\delta){\bf P}_0\|}{1-4\|{\bf B}(\delta)\|/{\mu}}
\leq 16\,C
\frac{\|{\bf S}_0{\bf B}(\delta)\|^2}
{1-4\|{\bf B}(\delta)\|/{\mu}}
\label{eq:L_delta_proj_mainterm}
\end{gather}
is valid with the same $C$ as in Theorem  \ref{theor:gen_upper}.
\end{theorem}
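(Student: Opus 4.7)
The plan is to continue the Kato–Riesz contour technique that underlies Theorem \ref{theor:gen_upper}, this time retaining the leading cross-projection correction explicitly rather than bounding it away. Start from
$$\mathbf{P}_0^{\perp}(\delta)=-\frac{1}{2\pi i}\oint_\Gamma\bigl(\mathbf{A}+\mathbf{B}(\delta)-z\mathbf{I}\bigr)^{-1}dz,\qquad \mathbf{A}=\mathbf{H}\mathbf{H}^{\rm T},$$
with $\Gamma=\{|z-\mu|=\mu/2\}$ separating $\mu$ from $0$. The monotonicity of $B(\cdot)$ in $|\delta|$ and $B(\delta_0)=\mu/4$ give $\|\mathbf{B}(\delta)\|\leq B(\delta)<\mu/4$, the same convergence condition used in Theorem \ref{theor:gen_upper}. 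The invertibility of $\mathbf{I}-\delta\mathbf{A}_0^{(2)}/\mu$ and the auxiliary bound $\|\delta\mathbf{A}_0^{(2)}\|<1$ emerge from the Neumann series $\sum_{m\geq 0}(\delta\mathbf{A}_0^{(2)}/\mu)^m$ that will appear in the expansion below.

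Next I expand the perturbed resolvent as $\sum_{k\geq 0}(-1)^k[R_0(z)\mathbf{B}(\delta)]^k R_0(z)$ and split $R_0(z)=R_0^+(z)-\mathbf{P}_0/z$, where $R_0^+(z)=\mathbf{P}_0^{\perp}(\mathbf{A}-z\mathbf{I})^{-1}\mathbf{P}_0^{\perp}$ is holomorphic off the positive spectrum. Substituting this splitting into every factor and integrating term-by-term along $\Gamma$ produces a representation of $\mathbf{P}_0^{\perp}(\delta)-\mathbf{P}_0^{\perp}$ as a sum of matrix monomials built from $\mathbf{P}_0$, $\mathbf{P}_0^{\perp}$, $\mathbf{S}_0$, and $\mathbf{B}(\delta)$, the $\mathbf{S}_0$'s arising from the residues $-\frac{1}{2\pi i}\oint_\Gamma R_0^+(z)/z^k\,dz$ that pair $R_0^+$ with the $-\mathbf{P}_0/z$ summands. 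I then group these monomials by the type of their extreme left and right projectors: the pairs $(\mathbf{P}_0^{\perp},\mathbf{P}_0)$ and $(\mathbf{P}_0,\mathbf{P}_0^{\perp})$ form the off-diagonal contribution, the pairs $(\mathbf{P}_0^{\perp},\mathbf{P}_0^{\perp})$ and $(\mathbf{P}_0,\mathbf{P}_0)$ the diagonal one. The algebraic identity $\mathbf{P}_0\mathbf{A}^{(1)}\mathbf{P}_0=\mathbf{0}$, stemming from $\mathbf{H}^{\rm T}\mathbf{P}_0=\mathbf{0}$, collapses every internal $\mathbf{P}_0\cdots\mathbf{P}_0$ sandwich in the off-diagonal series to a pure product of $\mathbf{A}_0^{(2)}=\mathbf{P}_0\mathbf{A}^{(2)}\mathbf{P}_0$ blocks. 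This allows the $(\mathbf{P}_0^{\perp},\mathbf{P}_0)$ off-diagonal contribution to be resummed into the closed form
$$\mathbf{S}_0\mathbf{B}(\delta)\mathbf{P}_0\sum_{m\geq 0}\bigl(\delta\mathbf{A}_0^{(2)}/\mu\bigr)^m=\mathbf{S}_0\mathbf{B}(\delta)\mathbf{P}_0\bigl(\mathbf{I}-\delta\mathbf{A}_0^{(2)}/\mu\bigr)^{-1}=\mathbf{L}_1(\delta),$$
symmetry delivering $\mathbf{L}_1^{\rm T}(\delta)$ from the mirrored words and thereby producing $\mathbf{L}(\delta)$.

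What remains is the sum over the diagonal monomials, which constitutes the residual $\mathbf{P}_0^{\perp}(\delta)-\mathbf{P}_0^{\perp}-\mathbf{L}(\delta)$. Each such monomial carries $\mathbf{S}_0\mathbf{B}(\delta)$ on both ends, because the Riesz contour forces at least two projector changes in any diagonal word; bounding the remaining contour integrals in operator norm by the same geometric-series majorant employed in the proof of Theorem \ref{theor:gen_upper} then yields
$$\bigl\|\mathbf{P}_0^{\perp}(\delta)-\mathbf{P}_0^{\perp}-\mathbf{L}(\delta)\bigr\|\leq 16C\,\frac{\|\mathbf{S}_0\mathbf{B}(\delta)\|\,\|\mathbf{S}_0\mathbf{B}(\delta)\mathbf{P}_0\|}{1-4\|\mathbf{B}(\delta)\|/\mu},$$
the constant $16C$ arising from the $4C$ of Theorem \ref{theor:gen_upper} together with the extra $\|\mathbf{S}_0\mathbf{B}(\delta)\|$-peel introduced by the second cross-projection change and the symmetrisation $\mathbf{L}=\mathbf{L}_1+\mathbf{L}_1^{\rm T}$; the second inequality of the statement is trivial from $\|\mathbf{S}_0\mathbf{B}(\delta)\mathbf{P}_0\|\leq\|\mathbf{S}_0\mathbf{B}(\delta)\|$. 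I expect the main obstacle to be the algebraic bookkeeping in the off-diagonal resummation: one must verify carefully that each off-diagonal word reduces, after the contour integral, to the canonical form $\mathbf{S}_0\mathbf{B}(\delta)\mathbf{P}_0\cdot(\delta\mathbf{A}_0^{(2)}/\mu)^m$ and that the "mixed" words carrying intermediate $\mathbf{P}_0^{\perp}$-letters are segregated into the diagonal remainder and do not contaminate $\mathbf{L}_1$. Once that identification is in place, the residual norm estimate becomes a direct refinement of the Neumann-series bound already used for Theorem \ref{theor:gen_upper}.
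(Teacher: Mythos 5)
The paper itself gives no proof of this statement: Theorem \ref{theor:L_delta_proj} is imported, in a form adapted to the rank-one case, from \cite[theor.~2.5]{Nekrutkin10}, so the only fair comparison is with the method of that reference. Your Kato--Riesz contour expansion is indeed the right family of techniques and the right overall plan (expand the perturbed resolvent, split $R_0(z)$ into $-\mathbf{P}_0/z$ plus the part reduced by $\mathbf{P}_0^{\perp}$, identify $\mathbf{L}_1(\delta)$ as the resummation of the simplest off-diagonal words, bound the rest). But as it stands the proposal is a plan rather than a proof, and the decisive steps are missing or misstated.

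Concretely: (a) the claimed dichotomy ``off-diagonal words resum to $\mathbf{L}_1(\delta)+\mathbf{L}_1^{\rm T}(\delta)$, remainder $=$ diagonal words carrying $\mathbf{S}_0\mathbf{B}(\delta)$ at both ends'' fails under your own grouping rule. The residue calculus also produces off-diagonal words with two or more reduced-resolvent letters (e.g. $\mathbf{S}_0\mathbf{B}\mathbf{S}_0\mathbf{B}\mathbf{P}_0$, whose extreme pair is $(\mathbf{P}_0^{\perp},\mathbf{P}_0)$) and words containing higher powers $\mathbf{S}_0^{k}$, $k\geq 2$, which are forced by balancing the $-\mathbf{P}_0/z$ poles; none of these are of the canonical form $\mathbf{S}_0\mathbf{B}\mathbf{P}_0(\cdot)^m$, so they must be pushed into the remainder, while diagonal words such as $\mathbf{P}_0\mathbf{B}\mathbf{S}_0^{2}\mathbf{B}\mathbf{P}_0$ do not literally carry $\mathbf{S}_0\mathbf{B}$ at both ends and need, e.g., the symmetrization $\mathbf{P}_0\mathbf{B}\mathbf{S}_0^{2}\mathbf{B}\mathbf{P}_0=(\mathbf{S}_0\mathbf{B}\mathbf{P}_0)^{\rm T}(\mathbf{S}_0\mathbf{B}\mathbf{P}_0)$ to yield the product $\|\mathbf{S}_0\mathbf{B}(\delta)\|\,\|\mathbf{S}_0\mathbf{B}(\delta)\mathbf{P}_0\|$. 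You yourself defer exactly this bookkeeping as ``the main obstacle,'' yet it, together with the order-by-order count of words (a central-binomial/Catalan-type count, which is where $C=e^{1/6}/\sqrt{\pi}$, the factor $16$ and the denominator $1-4\|\mathbf{B}(\delta)\|/\mu$ come from), is the entire content of the theorem; appealing to ``the same geometric-series majorant as in Theorem \ref{theor:gen_upper}'' does not help, because that theorem is likewise only quoted, not proved, here. (b) The invertibility claim is argued circularly: the Neumann series cannot ``deliver'' the norm bound needed for its own convergence; it must be derived from the hypothesis, e.g. $\delta^{2}\|\mathbf{A}_0^{(2)}\|\leq\delta^{2}\|\mathbf{A}^{(2)}\|\leq B(\delta)<B(\delta_0)=\mu/4$. (c) Your own collapsing identity gives $\mathbf{P}_0\mathbf{B}(\delta)\mathbf{P}_0=\delta^{2}\mathbf{A}_0^{(2)}$, so the geometric series you resum is in powers of $\delta^{2}\mathbf{A}_0^{(2)}/\mu$, not of $\delta\mathbf{A}_0^{(2)}/\mu$ as in \eqref{eq:L1}; this mismatch between your derivation and the formula you claim to recover has to be reconciled rather than passed over.
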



\subsection{The convergence of $\big\|{\bf P}_0^{\perp}(\delta)-{\bf P}_0^{\perp}\big\|$}
We start with the norm of ${\bf B}(\delta)$.
\begin{lemma}
\label{lem:b_delta_t0}
Assume that $L\sim \alpha N$ with $\alpha\in (0,1)$.
Then there exist $\delta_0 > 0$, $N_0$ and $C$ such that $C\delta_0^2<1/4$ and
\begin{gather*}
 \|{\bf B}(\delta)\|/\mu \leq B(\delta)/\mu
 \leq C\delta^2
\end{gather*}
for  any $\delta$ with $|\delta|\leq \delta_0$ and $N>N_0$.
\end{lemma}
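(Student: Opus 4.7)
The plan is to control separately the three ingredients entering $B(\delta)/\mu$: the normalizer $\mu$, the quadratic part $\|\mathbf{A}^{(2)}\|$, and the mixed term $\|\mathbf{A}^{(1)}\|$. I would start with $\mu$, which is explicit via \eqref{eq:exp_muN}. Under $L\sim\alpha N$ and the discretization limits $a^{2T/N}-1\sim 2T(\ln a)/N$, $a^{2LT/N}\to a^{2\alpha T}$, $a^{2KT/N}\to a^{2(1-\alpha)T}$, a direct Taylor expansion yields
\[
\mu\;\sim\;\frac{(a^{2\alpha T}-1)(a^{2(1-\alpha)T}-1)}{4T^2(\ln a)^2}\,N^{2},
\]
so $\mu$ is of exact order $N^2$.

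For $\mathbf{A}^{(2)}=\mathbf{E}\mathbf{E}^{\rm T}$, the trivial bound $\|\mathbf{A}^{(2)}\|\le\|\mathbf{E}\|_F^2\le LK=O(N^2)$ combined with the previous step gives $\delta^2\|\mathbf{A}^{(2)}\|/\mu\le c_2\,\delta^2$ for all $N>N_0$. The term $\mathbf{A}^{(1)}=\mathbf{H}\mathbf{E}^{\rm T}+\mathbf{E}\mathbf{H}^{\rm T}$ is the delicate one, and here the discretization scheme enters decisively through $\mathrm{rank}\,\mathbf{H}=1$, i.e.\ $\mathbf{H}=W_LW_K^{\rm T}$. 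This yields
\[
\mathbf{A}^{(1)}=W_L(\mathbf{E}W_K)^{\rm T}+(\mathbf{E}W_K)W_L^{\rm T},\qquad \|\mathbf{A}^{(1)}\|\le 2\|W_L\|\,\|\mathbf{E}W_K\|.
\]
The $k$-th coordinate of $\mathbf{E}W_K$ equals $\varPhi_K(a^{T/N},k\xi+\varphi)$ in the notation of \eqref{eq:phi_l_eq}. Plugging $b=a^{T/N}\to 1^+$ into the closed form \eqref{eq:varPhiM} shows these quantities are $O(1)$ uniformly in $k$: the numerator stays bounded since $a^{KT/N}\le a^T$, while the denominator $b^2+1-2b\cos\xi$ stays bounded away from zero by $2(1-\cos\xi)>0$. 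Hence $\|\mathbf{E}W_K\|=O(\sqrt{N})$, and combined with $\|W_L\|=O(\sqrt{N})$ one obtains $\|\mathbf{A}^{(1)}\|=O(N)$, so $|\delta|\,\|\mathbf{A}^{(1)}\|/\mu=O(|\delta|/N)$.

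Combining the three bounds yields $B(\delta)/\mu\le c_1|\delta|/N+c_2\delta^2$ for $N>N_0$, from which the stated estimate follows by choosing $\delta_0$ small and $N_0$ large so that the right-hand side is controlled by $C\delta^2$ with $C\delta_0^2<1/4$. The main obstacle is bounding the cross term $\|\mathbf{A}^{(1)}\|$: the crude inequality $\|\mathbf{A}^{(1)}\|\le 2\|\mathbf{H}\|\,\|\mathbf{E}\|=O(N^2)$ would place it on the same scale as $\mu$ and destroy the bound. The rank-one factorization of $\mathbf{H}$ provided by the discretization scheme, together with the uniform $O(1)$ control of $\varPhi_K(a^{T/N},\cdot)$ inherited from \eqref{eq:varPhiM} in the limit $b\to 1^+$, is what supplies the decisive extra factor $1/N$ and makes $\|\mathbf{A}^{(1)}\|=o(\mu)$.
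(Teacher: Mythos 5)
Your proposal is correct and follows essentially the same route as the paper: the asymptotic $\mu\sim\frac{(a^{2\alpha T}-1)(a^{2(1-\alpha)T}-1)}{4T^2\ln^2 a}N^2$, the rank-one factorization $\mathbf{H}=W_LW_K^{\rm T}$ giving $\|\mathbf{A}^{(1)}\|\le 2\|W_L\|\,\|\mathbf{E}W_K\|=O(N)$ via the uniform bound $|\varPhi_K(a^{T/N},\psi)|\le C_1$, and an $O(N^2)$ bound on $\|\mathbf{E}\mathbf{E}^{\rm T}\|$. The only deviation is cosmetic: you bound $\|\mathbf{E}\mathbf{E}^{\rm T}\|$ by the Frobenius norm $\|\mathbf{E}\|_F^2\le LK$ instead of the paper's explicit trigonometric summation combined with $\|\mathbf{A}\|\le\sqrt{\ell k}\,\|\mathbf{A}\|_{\max}$, and you spell out (via \eqref{eq:varPhiM} and the lower bound on $b^2+1-2b\cos\xi$) the uniform boundedness of $\varPhi_K$ that the paper only asserts; your residual cross term $c_1|\delta|/N$ is exactly the $o(N^2)$ contribution the paper absorbs in its final display.
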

\begin{proof}
First of all,
\begin{gather*}
{\bf E}{\bf E}^{\rm T} = \frac{K}{2}
\begin{pmatrix}
1 & \cos \xi & \ldots & \cos((L - 1)\xi)\\
\ldots & \ldots & \ldots & \ldots\\
\cos((L - 1)\xi) & \cos((L - 2)\xi) & \ldots & 1
\end{pmatrix} +\\
+ \frac{\sin(K\xi)}{2\sin\xi}
\begin{pmatrix}
\cos((K - 1)\xi + 2\varphi) & \cos(K\xi + 2\varphi) & \ldots & \cos((N - 1)\xi + 2\varphi)\\
\ldots & \ldots & \ldots & \ldots\\
\cos((N - 1)\xi + 2\varphi) & \cos(N\xi + 2\varphi) & \ldots & \cos((N + L - 2)\xi + 2\varphi)
\end{pmatrix}.
\end{gather*}

Since $\sin\xi > 0$ and  $\|\mathbf{A}\| \leq \sqrt{\ell k}\|\mathbf{A}\|_{\max}$ for any matrix $\mathbf{A}:\mathbb{R}^\ell \mapsto \mathbb{R}^k$, then
\begin{gather}
\label{eq:EE}
\|{\bf E}{\bf E}^{\rm T}\| \leq \frac{LK}{2} + \frac{|\sin(K\xi)|L}{2\sin\xi}
\sim \alpha(1-\alpha) N^2/2
\end{gather}
as $N\rightarrow \infty$.
Using the analogue of \eqref{tq:EWK} we see that
\begin{gather}
\|{\bf H}{\bf E}^{\rm T}+{\bf E}{\bf H}^{\rm T}\|\leq 2 \|{\bf E}W_K\|\,\|W_L\|\leq
2\sqrt{\sum_{\ell = 0}^{L - 1}\varPhi_K^2(a^{T/N}, \xi \ell + \varphi)\, \frac{a^{2LT/N} - 1}{a^{2T/N} - 1}}\ .
\label{eq:HE_discr}
\end{gather}

It can be checked that $|\varPhi_{K}(a^{T/N}, \psi)| \leq C$ with a certain constant $C=C(a,T,\alpha,\xi)$ that does not depend on $\psi$.
For the further use we denote
\begin{gather}
\label{eq:maxC}
C_1 = \max\big(C(a,T,\alpha,\xi), C(a,T,1-\alpha, \xi)\big).
\end{gather}
 Since
\begin{gather*}
 \frac{a^{2LT/N} - 1}{a^{2T/N} - 1}\sim \frac{a^{2\alpha T}-1}{2T\ln a}\ N
\end{gather*}
as $N\rightarrow \infty$,
then it follows from \eqref{eq:HE_discr} that
\begin{gather*}
\|{\bf B}(\delta)\| \leq B(\delta)\leq
\delta^2\, \frac{\alpha(1 - \alpha)}{2}\, N^2 + o(N^2).
\end{gather*}
In view of the asymptotic
\begin{gather}
\label{eq:mu_as}
\mu = \frac{(a^{2\alpha T} - 1)(a^{2(1 - \alpha)T} - 1)}{4T^2\ln^2 a}\, N^2 + o(N^2),
\end{gather}
the proof is complete.
\end{proof}
\begin{theorem}
\label{theor:diff_proj_discr}
Under the conditions of Lemma \ref{lem:b_delta_t0},
$\big\|\mathbf{P}_0^\perp(\delta)-\mathbf{P}_0^\perp\big\|=O(N^{-1})$
as $N\rightarrow \infty$.
 \end{theorem}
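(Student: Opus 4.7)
The plan is to apply Theorem \ref{theor:gen_upper}. By Lemma \ref{lem:b_delta_t0}, for $|\delta|\leq\delta_0$ and $N$ sufficiently large we have $\|\mathbf{B}(\delta)\|/\mu\leq C\delta^2<1/4$, so the denominator $1-4\|\mathbf{B}(\delta)\|/\mu$ is bounded below by a positive constant. Everything therefore reduces to establishing the estimate $\|\mathbf{S}_0\mathbf{B}(\delta)\mathbf{P}_0\|=O(N^{-1})$.

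Since $\mathrm{rank}\,\mathbf{H}=1$ we have $\mathbf{H}=W_L W_K^{\mathrm T}$, and the discretization analogue of \eqref{eq:P0S0} gives $\mathbf{P}_0^\perp=W_L W_L^{\mathrm T}/\|W_L\|^2$ together with $\mathbf{S}_0=\mathbf{P}_0^\perp/\mu$. Hence $\|\mathbf{S}_0\mathbf{B}(\delta)\mathbf{P}_0\|=\|\mathbf{P}_0^\perp\mathbf{B}(\delta)\mathbf{P}_0\|/\mu$, and since $\mu\sim cN^2$ by \eqref{eq:mu_as} the task becomes to prove $\|\mathbf{P}_0^\perp\mathbf{B}(\delta)\mathbf{P}_0\|=O(N)$. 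I split $\mathbf{B}(\delta)=\delta\mathbf{A}^{(1)}+\delta^2\mathbf{A}^{(2)}$ and treat the two summands separately.

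For the linear part, since $W_L$ spans $\mathbb{U}_0^\perp$ we have $\mathbf{P}_0 W_L=0$, hence $\mathbf{H}^{\mathrm T}\mathbf{P}_0=W_K W_L^{\mathrm T}\mathbf{P}_0=0$, while $\mathbf{P}_0^\perp\mathbf{H}=\mathbf{H}$. Thus $\mathbf{P}_0^\perp\mathbf{A}^{(1)}\mathbf{P}_0$ collapses to the single term $\mathbf{H}\mathbf{E}^{\mathrm T}\mathbf{P}_0=W_L(\mathbf{P}_0\mathbf{E}W_K)^{\mathrm T}$, a rank-one matrix of norm $\|W_L\|\,\|\mathbf{P}_0\mathbf{E}W_K\|$. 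Using the discretization version of \eqref{tq:EWK} together with the uniform bound $|\varPhi_K(a^{T/N},\psi)|\leq C_1$ established in the proof of Lemma \ref{lem:b_delta_t0}, I get $\|\mathbf{E}W_K\|\leq\sqrt{L}\,C_1=O(\sqrt{N})$, and combined with $\|W_L\|=O(\sqrt{N})$ this produces $\|\mathbf{P}_0^\perp\mathbf{A}^{(1)}\mathbf{P}_0\|=O(N)$.

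The main obstacle is the quadratic term, because the naive inequality $\|\mathbf{P}_0^\perp\mathbf{E}\mathbf{E}^{\mathrm T}\mathbf{P}_0\|\leq\|\mathbf{E}\mathbf{E}^{\mathrm T}\|=O(N^2)$ would leave only $O(1)$ after division by $\mu$. The missing factor of $N$ is recovered by exploiting the rank-one form of $\mathbf{P}_0^\perp$ a second time: one has
\begin{gather*}
\|\mathbf{P}_0^\perp\mathbf{E}\mathbf{E}^{\mathrm T}\mathbf{P}_0\|=\frac{\|\mathbf{P}_0\mathbf{E}\mathbf{E}^{\mathrm T}W_L\|}{\|W_L\|}\leq\frac{\|\mathbf{E}\|\,\|\mathbf{E}^{\mathrm T}W_L\|}{\|W_L\|}.
\end{gather*}
The analogue of the formula for $\mathbf{E}^{\mathrm T}W_L$ stated just after \eqref{tq:EWK}, combined with $|\varPhi_L(a^{T/N},\psi)|\leq C_1$, yields $\|\mathbf{E}^{\mathrm T}W_L\|=O(\sqrt{N})$, while \eqref{eq:EE} gives $\|\mathbf{E}\|=\|\mathbf{E}\mathbf{E}^{\mathrm T}\|^{1/2}=O(N)$ and $\|W_L\|=O(\sqrt{N})$. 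Hence $\|\mathbf{P}_0^\perp\mathbf{A}^{(2)}\mathbf{P}_0\|=O(N)$. Adding the two estimates gives $\|\mathbf{P}_0^\perp\mathbf{B}(\delta)\mathbf{P}_0\|=O(N)$ uniformly in $|\delta|\leq\delta_0$, so $\|\mathbf{S}_0\mathbf{B}(\delta)\mathbf{P}_0\|=O(N^{-1})$, and Theorem \ref{theor:gen_upper} completes the proof.
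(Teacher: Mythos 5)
Your proposal is correct and follows essentially the same route as the paper: Theorem \ref{theor:gen_upper} plus Lemma \ref{lem:b_delta_t0} reduce everything to an $O(1/N)$ bound on $\|{\bf S}_0{\bf B}(\delta){\bf P}_0\|$, which you obtain from the rank-one structure ${\bf S}_0={\bf P}_0^\perp/\mu$, ${\bf H}=W_LW_K^{\rm T}$ and the uniform bound on $\varPhi_K(a^{T/N},\cdot)$, exactly the ingredients of the paper's estimates of ${\bf D}_1,{\bf D}_2,{\bf D}_3$. The only (harmless) deviation is that you keep the factor ${\bf P}_0$, which annihilates the ${\bf E}{\bf H}^{\rm T}$ term, and you bound the quadratic term by $\|{\bf E}\|\,\|{\bf E}^{\rm T}W_L\|$ instead of the paper's explicit $\varPsi_K$ computation, whereas the paper establishes the slightly stronger statement $\|{\bf S}_0{\bf B}(\delta)\|=O(1/N)$, which it reuses later in Theorem \ref{theor:main_discr}.
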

\begin{proof}
Due to Theorem \ref{theor:gen_upper} and Lemma \ref{lem:b_delta_t0}, all we need is to proof that
\begin{gather}
\label{eq:SOB}
\|{\bf S}_0{\bf B}(\delta)\| = O(1/N).
\end{gather}
By \eqref{eq:P0S0},
\begin{gather}
{\bf S}_0{\bf B}(\delta) =
\delta\,\frac{{\bf D}_1}{\|W_L\|^2\|W_K\|^2} + \delta\,\frac{{\bf D}_2}{\|W_L\|^4\|W_K\|^2} + \delta^2\,
\frac{{\bf D}_3}{\|W_L\|^4\|W_K\|^2}
\label{eq:SOD_discr1}
\end{gather}
with
\begin{gather}
\label{eq:D123}
{\bf D}_1=W_LW_K^{\rm T}{\bf E}^{\rm T},\quad {\bf D}_2=W_LW_L^{\rm T}{\bf E}W_KW_L^{\rm T},\ \ \text{and}\ \
{\bf D}_3=W_LW_L^{\rm T}{\bf E}{\bf E}^{\rm T}.
\end{gather}

Consider summands in the righthand side of \eqref{eq:SOD_discr1} separately.
First of all,
\begin{gather*}
{\bf D}_3{\bf D}_3^{\rm T} = W_LW_L^{\rm T}{\bf E}{\bf E}^{\rm T}{\bf E}{\bf E}^{\rm T}W_LW_L^{\rm T}
= W_LW_L^{\rm T}\sum_{i = 0}^{L - 1}\left(\sum_{j = 0}^{L - 1}a^{jT/N}\varPsi_K(\varphi, i, j)\right)^2,
\end{gather*}
where
\begin{gather}
\label{eq:psi_l_eq}
\varPsi_M(\psi, k, \ell) = \sum_{j = 0}^{M - 1}\cos(\xi (j + k) + \psi)\cos(\xi (j + \ell) + \psi).
\end{gather}

Since
\begin{gather}
\sum_{j = 0}^{L - 1}a^{jT/N}\varPsi_K(\varphi, i, j) =
\frac{K}{2}\, \varPhi_L(a^{T/N}, -i\xi) +
 \frac{\sin(K\xi)}{2\sin\xi}\, \varPhi_L(a^{T/N}, i\xi + (K - 1)\xi + 2\varphi),
 \label{eq:sumvarphi}
\end{gather}
we get
\begin{gather*}
\|{\bf D}_3\|^2 \!\leq\!\frac{a^{2TL/N}\! -\!1}{a^{2T/N} - 1}\left(C_1^2\,\frac{K^2L}{4} + O(LK)\!\right)
=\frac{a^{2\alpha T} - 1}{2T \ln a}\, C_1^2\,\frac{(1 - \alpha)^2\alpha}{4}\, N^4 + o(N^4)
\end{gather*}
with $C_1$ defined in \eqref{eq:maxC}.
Thus
$\|{\bf D}_3\|\, \big(\|W_L\|^4\|W_K\|^2\big)^{-1}=O(1/N)$.
In the same manner,
\begin{gather*}
{\bf D}_2 =
W_LW_L^{\rm T}\sum_{i = 0}^{L - 1} a^{iT/N} \varPhi_K(a^{T/N}, i\xi + \varphi)
\end{gather*}
and
\begin{gather*}
\|{\bf D}_2\| \leq
C_1\sqrt{\frac{a^{2\alpha T} - 1}{2}}\, \frac{a^{\alpha T} - 1}{(T\ln a)^{3/2}}\,  N^{3/2} + o(N^{3/2}).
\end{gather*}
Therefore, $\|{\bf D}_2\|\, \big(\|W_L\|^4\|W_K\|^2\big)^{-1}=O(1/N)$.
Lastly,
\begin{gather*}
{\bf D}_1{\bf D}_1^{\rm T}  = W_LW_K^{\rm T}{\bf E}^{\rm T}{\bf E}W_KW_L^{\rm T}
= W_LW_L^{\rm T} \sum_{i = 0}^{L - 1}\varPhi_K^2(a^{T/N}, i\xi + \varphi).
\end{gather*}
Since
$
\sum_{i = 0}^{L - 1}\varPhi_K^2(a^{T/N}, i\xi + \varphi) \leq L C_1^2,
$
then
\begin{gather*}
\|{\bf D}_1\|^2 \leq \frac{a^{2T_0L/N}\! - \!1}{a^{2T_0/N} \!- \!1}\, C_1^2 L
=
\alpha\, \frac{a^{2\alpha T} - 1}{2T \ln a}C_1^2\,N^2\! + o(N^2),
\end{gather*}
$\|{\bf D}_1\|\,\big(\|W_L\|^2\|W_K\|^2\big)^{-1}=O(1/N)$ and the proof is complete.
%
\end{proof}

\subsection{Reconstruction errors}

To investigate  the reconstruction errors we use the same idea as in Section \ref{sect:rec_an}
but deal with the inequality \eqref{eq:L_delta_proj_mainterm} instead of
\eqref{eq:main_term V} and use the
 expression
\begin{gather}
\label{eq:deltaH_pres2}
\Delta_{\delta}({\bf H}) =
\big({\bf P}_0^\perp(\delta) - {\bf P}_0^\perp- {\bf L}(\delta)\big){\bf H}(\delta) + \delta {\bf P}_0^\perp{\bf E}+
{\bf L}(\delta){\bf H}+\delta{\bf L}(\delta)\mathbf{E}\ .
\end{gather}
instead of \eqref{eq:deltaH_pres1}.
For this goal, we need the following supplementary assertions.

\begin{lemma}
\label{lem:Zmax1}
Denote ${\bf Z} = \delta{\bf A}_0^{(2)}/\mu=\delta \mathbf{P}_0{\bf EE}^{\rm T}\mathbf{P}_0/\mu$.
Then
there exists a  constant $C_2$ such that
\begin{gather}
\|{\bf Z}\|_{\max} \leq {|\delta| C_2}/{N}
\label{eq:Zmax1}
\end{gather}
 with  $\|{\bf Z}\|_{\max}=\max_{m,\ell} \big|{\bf Z}[m,\ell]\big|$.
 \end{lemma}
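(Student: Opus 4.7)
The plan is to prove $\|\mathbf{P}_0 \mathbf{E}\mathbf{E}^{\rm T}\mathbf{P}_0\|_{\max} = O(N)$ and then divide by $\mu \sim CN^2$ from \eqref{eq:mu_as}. Note that the naive estimate $\|\mathbf{Z}\|_{\max} \leq \|\mathbf{Z}\|$ combined with \eqref{eq:EE} yields only $O(|\delta|)$, so one must extract an extra factor $1/N$ from the explicit structure of $\mathbf{E}\mathbf{E}^{\rm T}$.

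The first step is to use the formula for $\mathbf{E}\mathbf{E}^{\rm T}$ recorded inside the proof of Lemma \ref{lem:b_delta_t0}. Applying $\cos(\xi(m-\ell)) = \cos(\xi m)\cos(\xi \ell) + \sin(\xi m)\sin(\xi \ell)$ and the analogous expansion of $\cos(\xi(m + \ell + K - 1) + 2\varphi)$, one rewrites $\mathbf{E}\mathbf{E}^{\rm T}$ as a linear combination of at most six rank-one matrices $uv^{\rm T}$ with $u, v \in \{\mathbf{c}, \mathbf{s}\}$, where $\mathbf{c} = (\cos(\xi m))_{m=0}^{L-1}$ and $\mathbf{s} = (\sin(\xi m))_{m=0}^{L-1}$. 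The coefficient of the Toeplitz piece $\mathbf{c}\mathbf{c}^{\rm T} + \mathbf{s}\mathbf{s}^{\rm T}$ is $K/2 = O(N)$, whereas all coefficients coming from the Hankel piece are bounded by $1/(2|\sin\xi|) = O(1)$.

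Sandwiching preserves this decomposition since $\mathbf{P}_0(uv^{\rm T})\mathbf{P}_0 = (\mathbf{P}_0 u)(\mathbf{P}_0 v)^{\rm T}$, and the max entry of such a rank-one matrix equals $\|\mathbf{P}_0 u\|_{\max}\,\|\mathbf{P}_0 v\|_{\max}$. Everything then reduces to showing $\|\mathbf{P}_0 \mathbf{c}\|_{\max} = O(1)$ and similarly for $\mathbf{s}$. Since $\mathbf{P}_0 = \mathbf{I} - W_LW_L^{\rm T}/\|W_L\|^2$ and $\|\mathbf{c}\|_{\max} \leq 1$, this amounts to bounding the correction $|W_L^{\rm T}\mathbf{c}|\,\|W_L\|_{\max}/\|W_L\|^2$. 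Since $W_L^{\rm T}\mathbf{c}$ is the real part of the geometric sum $\sum_{i=0}^{L-1}(a^{T/N}e^{\mathrm{i}\xi})^i$, it is bounded in modulus by $2 a^{LT/N}/|a^{T/N}e^{\mathrm{i}\xi}-1|$, which stays $O(1)$ as $N \to \infty$ because $a^{LT/N} \to a^{\alpha T}$ and $|a^{T/N}e^{\mathrm{i}\xi}-1| \to 2|\sin(\xi/2)| > 0$. Together with $\|W_L\|_{\max} = a^{(L-1)T/N} = O(1)$ and $\|W_L\|^2 \sim (a^{2\alpha T}-1)N/(2T\ln a)$ from \eqref{eq:exp_muN}, the correction is $O(1/N)$, and $\|\mathbf{P}_0 \mathbf{c}\|_{\max} \leq 1 + O(1/N) = O(1)$ as needed.

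Putting the pieces together gives $\|\mathbf{P}_0\mathbf{E}\mathbf{E}^{\rm T}\mathbf{P}_0\|_{\max} \leq (K/2)\bigl(\|\mathbf{P}_0\mathbf{c}\|_{\max}^2 + \|\mathbf{P}_0\mathbf{s}\|_{\max}^2\bigr) + O(1) = O(N)$, and division by $\mu$ produces a constant $C_2$ with $\|\mathbf{Z}\|_{\max} \leq |\delta|C_2/N$ for $N$ large enough. The principal obstacle is exactly the one identified in the first paragraph: because $\|\mathbf{E}\mathbf{E}^{\rm T}\| \sim \alpha(1-\alpha)N^2/2$ already saturates at the order of $\mu$, gaining the extra $1/N$ is impossible without opening up the trigonometric form of $\mathbf{E}\mathbf{E}^{\rm T}$ and exploiting that its dominant component lies in the fixed two-dimensional subspace spanned by $\mathbf{c}$ and $\mathbf{s}$, vectors whose sup-norm and whose inner product with the exponential trend $W_L$ are both bounded uniformly in $N$.
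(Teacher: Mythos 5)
Your proof is correct and gives exactly the bound the lemma asserts, but it is organized differently from the paper's argument, so a comparison is worth making. The paper works entry-wise on the four-term expansion $\mathbf{P}_0\mathbf{E}\mathbf{E}^{\rm T}\mathbf{P}_0=\mathbf{E}\mathbf{E}^{\rm T}-\mathbf{E}\mathbf{E}^{\rm T}\mathbf{P}_0^{\perp}-\mathbf{P}_0^{\perp}\mathbf{E}\mathbf{E}^{\rm T}+\mathbf{P}_0^{\perp}\mathbf{E}\mathbf{E}^{\rm T}\mathbf{P}_0^{\perp}$: it identifies $\mathbf{E}\mathbf{E}^{\rm T}[m,\ell]=\varPsi_K(\varphi,m,\ell)=\tfrac{K}{2}\cos((m-\ell)\xi)+O(1)$ and then uses \eqref{eq:sumvarphi}, the uniform bound $|\varPhi_L(a^{T/N},\cdot)|\le C_1$ of \eqref{eq:maxC} and $\|W_L\|^2\sim cN$ to show that each projection correction is $O(1)$ per entry, which after division by $\mu\sim cN^2$ (see \eqref{eq:mu_as}) yields not just the estimate \eqref{eq:Zmax1} but the explicit leading form $\mathbf{Z}[m,\ell]=\delta\cos((m-\ell)\xi)\,c\,N^{-1}+o(1/N)$. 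You instead split $\mathbf{E}\mathbf{E}^{\rm T}$ into at most six dyads $uv^{\rm T}$ with $u,v\in\{\mathbf{c},\mathbf{s}\}$, use $\mathbf{P}_0(uv^{\rm T})\mathbf{P}_0=(\mathbf{P}_0u)(\mathbf{P}_0v)^{\rm T}$, and prove the sup-norm stability $\|\mathbf{P}_0\mathbf{c}\|_{\max}=1+O(1/N)$ from the geometric-sum bound on $W_L^{\rm T}\mathbf{c}$ (note $|a^{T/N}e^{\mathrm{i}\xi}-1|\ge\sin\xi>0$ uniformly in $N$, so your $O(1)$ claim is safe); that geometric-sum bound is the same analytic fact as the paper's $|\varPhi_L(a^{T/N},\psi)|\le C_1$, just phrased through inner products with $W_L$. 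So both proofs rest on the same two estimates — the entries of $\mathbf{E}\mathbf{E}^{\rm T}$ are $O(K)$ rather than its spectral norm $O(N^2)$, and exponential-cosine sums stay bounded — and your opening observation that the naive bound $\|\mathbf{Z}\|_{\max}\le\|\mathbf{Z}\|$ only gives $O(|\delta|)$ is precisely the obstacle the paper's computation is designed to beat. What your bookkeeping buys is a cleaner structural argument (rank-one pieces in the fixed span of $\mathbf{c},\mathbf{s}$); what it gives up is the explicit asymptotic form of $\mathbf{Z}[m,\ell]$ recorded in the paper, which, however, is not used later — only \eqref{eq:Zmax1} feeds into \eqref{eq:sumZ_max} and Theorem \ref{theor:main_discr} — so nothing is lost for the purposes of the lemma.
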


\begin{proof}
First of all,
\begin{gather*}
{\bf A}_0^{(2)}[m, \ell]\! = \!{\bf E}{\bf E}^{\rm T}[m, \ell]\! -\! \frac{{\bf E}{\bf E}^{\rm T}W_LW_L^{\rm T}[m, \ell]}{\|W_L\|^2}\! -\!
 \frac{W_LW_L^{\rm T}{\bf E}{\bf E}^{\rm T}[m, \ell]}{\|W_L\|^2}\! +\!
\frac{W_LW_L^{\rm T}{\bf E}{\bf E}^{\rm T}W_LW_L^{\rm T}[m, \ell]}{\|W_L\|^4}\,.
\end{gather*}
Note that ${\bf E}{\bf E}^{\rm T}[m, \ell]=\varPsi_K(\varphi, m, \ell)$, where $\varPsi_M(\psi, k, \ell)$ is defined in \eqref{eq:psi_l_eq}. Analogously,
\begin{gather*}
{\bf E}{\bf E}^{\rm T}W_LW_L^{\rm T}[m, \ell]=
a^{ \ell T/N}\sum_{j = 0}^{L - 1}a^{jT/N}\varPsi_K(\varphi, m, j),
\\
W_LW_L^{\rm T}{\bf E}{\bf E}^{\rm T}[m, \ell]=
a^{m T/N}\sum_{j = 0}^{L - 1}a^{jT/N}\varPsi_K(\varphi, j, \ell), \quad  \text{and}\\
W_LW_L^{\rm T}{\bf E}{\bf E}^{\rm T}W_LW_L^{\rm T}[m, \ell]=
a^{(\ell+m)T/N}\sum_{k ,j= 0}^{L - 1}a^{(j  + k)T/N}\varPsi_K(\varphi, k, j).
\end{gather*}
In view of \eqref{eq:sumvarphi},
\begin{gather*}
\frac{{\bf E}{\bf E}^{\rm T}W_LW_L^{\rm T}[m, \ell]}{\|W_L\|^2}\, +
 \frac{W_LW_L^{\rm T}{\bf E}{\bf E}^{\rm T}[m, \ell]}{\|W_L\|^2}\, +
 \frac{W_LW_L^{\rm T}{\bf E}{\bf E}^{\rm T}W_LW_L^{\rm T}[m, \ell]}{\|W_L\|^4}\, = O(1).
\end{gather*}
Since
$
\varPsi_K(\varphi, m, \ell)={K} \cos(\xi(m-\ell))/2+O(1)
$
as $K\rightarrow \infty$,
then
\begin{gather*}
{\bf A}_0^{(2)}[m, \ell]
=\frac{\cos((m - \ell)\xi)}{2}\, K + o(1)
\end{gather*}
uniformly in  $0 \leq m, \ell \leq L - 1$.
Therefore, see \eqref{eq:mu_as},
\begin{gather*}
{\bf Z}[m, \ell] = \delta \cos((m - \ell)\xi)\  \frac{2(1 - \alpha)T^2\ln^2 a}{(a^{2\alpha T_0} - 1)(a^{2(1 - \alpha)T_0} - 1)}\ \frac{1}{N} + o(1/N)
\end{gather*}
and the proof is complete.
\end{proof}
\begin{remark}
As the consequence of the inequality \eqref{eq:Zmax1} we get  that for any $n\geq 1$
 \begin{gather*}
  \|{\bf Z}^n\|_{\max}\leq {|\delta|^n C_2^n}/{N}.
 \end{gather*}
 Since
$
 \|{\bf Z}^n\|_{\max}\leq L \|{\bf Z}^{n-1}\|_{\max} \|{\bf Z}\|_{\max},
 $
this fact can be proved with the help of a simple induction.
Therefore, if $|\delta|C_2<1$, then
\begin{gather}
\label{eq:sumZ_max}
\Big\|\sum_{n\geq 1}\mathbf{Z}^n\Big\|_{\max}\leq \sum_{n\geq 1}\left\| \mathbf{Z}^n\right\|_{\max}\leq \frac{|\delta| C_2}{1-|\delta| C_2}\, \frac{1}{N}\ .
\end{gather}

\end{remark}

\begin{lemma}
\label{lem:addit_norms}
If the series $f_n=f_n^{(N)}$ is defined by \eqref{eq:discr_sign_noise}, $N\rightarrow \infty$ and $L\sim \alpha N$ with $\alpha\in (0,1)$, then
1) $\|{\bf B}(\delta){\bf H}\|_{\max} = O(N)$;
2) $\|{\bf S}_0{\bf B}(\delta)\|_{\max} = O(1/N^2)$;
3) $\|{\bf B}(\delta){\bf S}_0{\bf E}\|_{\max} = O(1/N^2)$ and
4) $\|{\bf P}_0^{\perp} {\bf E}\|_{\max}  = O(1/N)$.
\end{lemma}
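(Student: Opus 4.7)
The plan is to reduce every max-norm in the lemma to a handful of rank-one outer products and to evaluate each via a few elementary entry-wise estimates, exploiting the special form $\mathbf{H}=W_LW_K^{\rm T}$, $\mathbf{P}_0^{\perp}=W_LW_L^{\rm T}/\|W_L\|^2$, and $\mathbf{S}_0=W_LW_L^{\rm T}/(\|W_L\|^4\|W_K\|^2)$. With the identities $\mathbf{H}^{\rm T}\mathbf{H}=\|W_L\|^2W_KW_K^{\rm T}$ and $W_L^{\rm T}W_L=\|W_L\|^2$, every product collapses to a sum of outer products of the form $W_L v^{\rm T}$ or $u W_K^{\rm T}$ whose max-entry I can read off as $\|W_L\|_\infty\|v\|_\infty$ etc. Since $\|W_L\|_\infty,\|W_K\|_\infty\le a^T$ and $\|W_L\|^2,\|W_K\|^2=\Theta(N)$ by \eqref{eq:exp_muN}, only the vector/scalar factors involving $\mathbf{E}$ need to be tracked.

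I first prepare the building blocks I will reuse: (i) $|\varPhi_M(a^{T/N},\psi)|\le C_1$ uniformly in $\psi$ for $M=L$ or $M=K$, obtained from the closed form \eqref{eq:varPhiM} together with the fact that the denominator $b^2+1-2b\cos\xi$ stays bounded away from zero as $b=a^{T/N}\to 1$ (this uses $\xi\in(0,\pi)$); (ii) $(\mathbf{E}W_K)_i=\varPhi_K(a^{T/N},\xi i+\varphi)$ and $(W_L^{\rm T}\mathbf{E})_k=\varPhi_L(a^{T/N},\xi k+\varphi)$, so these vectors have sup-norm $O(1)$; (iii) the scalar $W_L^{\rm T}\mathbf{E}W_K=\varUpsilon_{L,K}(a^{T/N},\varphi)=O(N)$, seen by summing $L$ copies of (ii) against the bounded weights $a^{jT/N}$; (iv) the vector $\mathbf{E}\mathbf{E}^{\rm T}W_L$ has max-entry $O(N)$, by using the explicit two-term decomposition of $\mathbf{E}\mathbf{E}^{\rm T}$ recorded in the proof of Lemma \ref{lem:b_delta_t0}: its dominant $(K/2)\cos((m-\ell)\xi)$ part produces the $\varPhi_L$-type sum $(K/2)\varPhi_L(a^{T/N},-m\xi)=O(N)$, while the other piece carries the bounded factor $\sin(K\xi)/(2\sin\xi)$ in front of another $O(1)$ sum.

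With these in hand I work through the four items. Item (4) is immediate: $\mathbf{P}_0^\perp\mathbf{E}=W_L(W_L^{\rm T}\mathbf{E})/\|W_L\|^2$ has max-entry $O(1)\cdot O(1)/\Theta(N)=O(1/N)$. For item (1) I expand $\mathbf{B}(\delta)\mathbf{H}$ into three rank-one pieces $\varUpsilon_{L,K}W_LW_K^{\rm T}$, $\|W_L\|^2\mathbf{E}W_KW_K^{\rm T}$, and $(\mathbf{E}\mathbf{E}^{\rm T}W_L)W_K^{\rm T}$; estimates (iii), (ii), (iv) each give $O(N)$. For item (2) I expand $\mathbf{S}_0\mathbf{B}(\delta)$ in the same way and obtain three outer products $W_L v^{\rm T}$ divided by $\|W_L\|^4\|W_K\|^2=\Theta(N^3)$, with each $v$ of sup-norm $O(N)$, giving $O(1/N^2)$. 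For item (3) I factor $\mathbf{B}(\delta)\mathbf{S}_0\mathbf{E}=(\mathbf{B}(\delta)W_L)(W_L^{\rm T}\mathbf{E})/(\|W_L\|^4\|W_K\|^2)$, bound $\mathbf{B}(\delta)W_L$ componentwise by $O(N)$ using $\mathbf{HE}^{\rm T}W_L=\varUpsilon_{L,K}W_L$, $\mathbf{EH}^{\rm T}W_L=\|W_L\|^2\mathbf{E}W_K$, and (iv), then combine with the $O(1)$ entries of $W_L^{\rm T}\mathbf{E}$ and divide by $\Theta(N^3)$ to get $O(1/N^2)$.

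The only genuinely non-routine step is the uniform bound (i). The naive geometric estimate $|\varPhi_M(b,\psi)|\le(b^M-1)/(b-1)$ produces $O(N)$ when $b=a^{T/N}=1+O(1/N)$, which would inflate every bound in the lemma by a factor of $N$ and break the whole argument. Passing from $O(N)$ to $O(1)$ requires the oscillatory cancellation encoded in the closed form \eqref{eq:varPhiM}, and rests on $\xi\in(0,\pi)$ so that $2(1-\cos\xi)>0$ remains as a denominator after $a^{T/N}\to 1$. Once (i) is secured, the remaining work in (ii)--(iv) and in the expansions for items (1)--(3) is purely algebraic bookkeeping of rank-one products and $\Theta(N)$-scalings.
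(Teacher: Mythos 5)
Your proposal is correct and follows essentially the same route as the paper: the same rank-one decompositions (your three pieces for $\mathbf{B}(\delta)\mathbf{H}$ and $\mathbf{S}_0\mathbf{B}(\delta)$ are exactly the paper's $\mathbf{J}_i$ and $\mathbf{D}_i$, your item (3) is a trivially refactored version of the paper's expansion, and item (4) is identical), combined with the uniform bound $|\varPhi_M(a^{T/N},\psi)|\le C_1$, the identity for $\sum_j a^{jT/N}\varPsi_K(\varphi,i,j)$, the $O(N)$ bound for $\varUpsilon_{L,K}(a^{T/N},\varphi)$, and the $\Theta(N)$ asymptotics of $\|W_L\|^2,\|W_K\|^2$. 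Your explicit justification of the uniform $\varPhi$-bound via the closed form \eqref{eq:varPhiM} and $\xi\in(0,\pi)$ is a welcome addition, since the paper merely asserts it when defining $C_1$ in \eqref{eq:maxC}.
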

\begin{proof}
1) The matrix ${\bf B}(\delta){\bf H}$ can be rewritten as follows:
\begin{gather*}
{\bf B}(\delta){\bf H} =
\delta \|W_L\|^2 {\bf J}_1 + \delta {\bf J}_2 + \delta^2 {\bf J}_3
\end{gather*}
with
$
{\bf J}_1= {\bf E}W_KW_K^{\rm T}$, ${\bf J}_2 =W_LW_K^{\rm T}{\bf E}^{\rm T}W_LW_K^{\rm T}$ {and}
${\bf J}_3={\bf E}{\bf E}^{\rm T}W_LW_K^{\rm T}$.

Applying the equalities \eqref{eq:upsilon} and \eqref{eq:psi_l_eq}, \eqref{eq:sumvarphi}
we get that
\begin{gather*}
\|{\bf J}_1\|_{\max} = \max_{k < L,\, \ell < K}\, \big|\varPhi_K(a^{T/N}, k\xi + \varphi)a^{\ell T/N}\big| \leq C_1\,a^{(1 - \alpha)T}
 + o(1),\\
\|{\bf J}_2\|_{\max} = \big|\varUpsilon_{L,\, K}(a^{T/N}, \varphi)\big|\max_{k < L; \ell < K} a^{(k + \ell)T/N} \leq
C_1\, \frac{a^{T}(a^{\alpha T} - 1)}{T\ln a}\, N + o(N),\\
\|{\bf J}_3\|_{\max} = \max_{k < L,\, \ell < K}
\Bigg|\sum_{j = 0}^{L - 1}a^j\varPsi_K(\varphi, k, j)a^{\ell T/N}\Bigg| \leq C_1\,\frac{(1 - \alpha)a^{(1 - \alpha)T} }{2}\,
 N + o(N)
\end{gather*}
with $C_1$ defined in \eqref{eq:maxC}.
Thus the first assertion is proved.

2) The expression for ${\bf S}_0{\bf B}(\delta)$ is presented in \eqref{eq:SOD_discr1}, \eqref{eq:D123}. It can be checked that
\begin{gather*}
\|{\bf D}_1\|_{\max} = \max_{k < L; \ell < L}\, \big|\varPhi_K(a^{T/N}, k\xi + \varphi)a^{\ell T/N}\big|
\leq C_1\, a^{\alpha T} + o(1),\\
\|{\bf D}_2\|_{\max} = \big|\varUpsilon_{L, K}(a^{T/N}\!, \varphi)\big|\max_{k, \ell < L} a^{(k + \ell)T/N} \leq C_1 \frac{a^{2\alpha T}(a^{\alpha T} - 1)}{T\ln a}\,  N + o(N),\\
\|{\bf D}_3\|_{\max} = \max_{k, \ell < L}
\Bigg|\sum_{j = 0}^{L - 1}a^j\varPsi_K(\varphi, k, j)a^{\ell T/N}\Bigg| \leq C_1\,\frac{(1 - \alpha)a^{\alpha T} }{2}\, N + o(N).
\end{gather*}

Applying \eqref{eq:SOD_discr1} we see that
 $\|{\bf S}_0{\bf B}(\delta)\|_{\max} = O(1/N^2)$.

3)
In the same manner,
\begin{gather*}
{\bf B}(\delta){\bf S}_0{\bf E} = \delta {\bf E}{\bf H}^{\rm T}{\bf S}_0{\bf E} + \delta {\bf H}{\bf E}{\bf S}_0{\bf E}
+ \delta^2{\bf E}{\bf E}^{\rm T}{\bf S}_0{\bf E} =\\
=\frac{1}{\|W_L\|^4\|W_K\|^2}\big(\delta\|W_L\|^2{\bf E}W_KW_L^{\rm T}{\bf E} +
\delta W_LW_K^{\rm T}{\bf E}^{\rm T}W_LW_L^{\rm T}{\bf E} + \delta^2{\bf E}{\bf E}^{\rm T}W_LW_L^{\rm T}{\bf E}\big)
\end{gather*}
with
\begin{gather*}
\big\|{\bf E}W_KW_L^{\rm T}{\bf E}\big\|_{\max} = \max_{k < L; \ell < K}\, \big|\varPhi_K(a^{T/N}, k\xi + \varphi)\varPhi_L(a^{T/N}, \ell\xi + \varphi)\big| \leq
 C_1^2 + o(1),\\
\big\|W_LW_K^{\rm T}{\bf E}^{\rm T}W_LW_L^{\rm T}{\bf E}\big\|_{\max} = \big|\varUpsilon_{L, K}(a^{T/N}\!, \varphi)\big|
\max_{k < L; \ell < K} \big|a^{kT/N}\varPhi_L(a^{T/N}, \ell\xi + \varphi)\big| \leq\\
\leq C_1^2\,\frac{a^{\alpha T}(a^{\alpha T} - 1)}{T\ln a}\, N + o(N), \quad \text{and}\\
\big\|{\bf E}{\bf E}^{\rm T}W_LW_L^{\rm T}{\bf E}\big\|_{\max} = \max_{k < L; \ell < K} \Bigg|\sum_{j = 0}^{L - 1}a^j\varPsi_K(\varphi, k, j)\varPhi_L(a^{T/N}, \ell\xi + \varphi)\Bigg|
\leq N C_1^2\,{(1 - \alpha)}/{2} + o(N).
\end{gather*}
Therefore,
$\|{\bf B}(\delta){\bf S}_0{\bf E}\|_{\max} = O(1/N^2)$.

4)
Since
${\bf P}_0^{\perp} {\bf E} = W_LW_L^{\rm T}{\bf E}/\|W_L\|^2,$
then
\begin{gather*}
\|{\bf P}_0^{\perp} {\bf E}\|_{\max} = \frac{1}{\|W_L\|^2}\, \max_{k < K; \ell < L}\, \big|\varPhi_L(a^{T/N}, k\xi + \varphi)a^{\ell T/N}\big|
\leq C_1\,\frac{2T\ln a}{a^{2\alpha T} - 1}\, a^{\alpha T}\, \frac{1}{N} + o(1/N),
\end{gather*}
and the proof is complete.
\end{proof}

\begin{theorem}
\label{theor:main_discr}
Denote $r_j = r_j(N,\delta)$ the reconstruction error for the term $x_j = a^{jT/N}$ of the perturbed series
$f_j = x_j + \delta\cos(\xi j + \varphi)$ with $a > 1$ and $\xi\in (0,\pi)$, $\varphi\in [0,2\pi)$.

If $N \rightarrow \infty$ and $L = \alpha N + o(N)$ with $0 < \alpha < 1$,
then there exists $\delta^*>0$ such that
$r_j = O(1/N)$ uniformly in $0\leq j< N$ for any $\delta$ with $|\delta|<\delta^*$.
\end{theorem}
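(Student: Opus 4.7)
The plan is to bound $|r_j|$ by $\|\Delta_{\delta}(\mathbf{H})\|_{\max}$, invoking the elementary inequalities $\|\mathcal{S}\mathbf{A}\|_{\max} \leq \|\mathbf{A}\|_{\max} \leq \|\mathbf{A}\|$ (hankelization averages over anti-diagonals), and then to show that each of the four summands on the right-hand side of \eqref{eq:deltaH_pres2} has max-norm of order $O(1/N)$. The threshold $\delta^*$ is chosen small enough for both Lemma \ref{lem:b_delta_t0} and the Neumann bound \eqref{eq:sumZ_max} to apply.

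For the first summand $\bigl(\mathbf{P}_0^\perp(\delta)-\mathbf{P}_0^\perp-\mathbf{L}(\delta)\bigr)\mathbf{H}(\delta)$ the argument is purely spectral. Theorem \ref{theor:L_delta_proj} together with the bound $\|\mathbf{S}_0\mathbf{B}(\delta)\| = O(1/N)$ (established inside the proof of Theorem \ref{theor:diff_proj_discr}) and Lemma \ref{lem:b_delta_t0} gives $\|\mathbf{P}_0^\perp(\delta)-\mathbf{P}_0^\perp-\mathbf{L}(\delta)\| = O(1/N^2)$, while $\|\mathbf{H}(\delta)\| \leq \sqrt{\mu}+|\delta|\,\|\mathbf{E}\| = O(N)$ by \eqref{eq:mu_as} and \eqref{eq:EE}; the product is $O(1/N)$. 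The second summand $\delta\,\mathbf{P}_0^\perp\mathbf{E}$ is handled directly by Lemma \ref{lem:addit_norms}(4).

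The heart of the argument is the third and fourth summands. Writing $\mathbf{L}(\delta) = \mathbf{L}_1(\delta)+\mathbf{L}_1^{\rm T}(\delta)$, I would first notice that $\mathbf{P}_0\mathbf{H}=0$ (the columns of $\mathbf{H}$ lie in the signal subspace) and that $\mathbf{Z}\mathbf{H}=0$ (because $\mathbf{Z}$ ends with $\mathbf{P}_0$ on the right). These two facts collapse $\mathbf{L}_1(\delta)\mathbf{H}$ to zero and reduce the third summand to $\mathbf{L}(\delta)\mathbf{H} = (\mathbf{I}-\mathbf{Z})^{-1}\mathbf{P}_0\mathbf{B}(\delta)\mathbf{H}/\mu$. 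Lemma \ref{lem:addit_norms}(1) gives $\|\mathbf{B}(\delta)\mathbf{H}\|_{\max}/\mu = O(1/N)$, and the rank-one correction $\mathbf{P}_0^\perp\mathbf{B}(\delta)\mathbf{H} = W_L(W_L^{\rm T}\mathbf{B}(\delta)W_L)W_K^{\rm T}/\|W_L\|^2$ is evaluated explicitly to yield $\|\mathbf{P}_0\mathbf{B}(\delta)\mathbf{H}/\mu\|_{\max} = O(1/N)$; the Neumann factor $(\mathbf{I}-\mathbf{Z})^{-1}=\mathbf{I}+\sum_{n\geq 1}\mathbf{Z}^n$ preserves this rate via \eqref{eq:sumZ_max} and the submultiplicativity $\|\mathbf{AB}\|_{\max}\leq L\,\|\mathbf{A}\|_{\max}\|\mathbf{B}\|_{\max}$.

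For the fourth summand $\delta\,\mathbf{L}(\delta)\mathbf{E}$ the two halves are estimated separately. The piece $\mathbf{L}_1(\delta)\mathbf{E} = \mathbf{S}_0\mathbf{B}(\delta)\mathbf{P}_0(\mathbf{I}-\mathbf{Z})^{-1}\mathbf{E}$ is controlled via Lemma \ref{lem:addit_norms}(2); a useful auxiliary observation is that $\mathbf{P}_0^\perp\mathbf{Z}=0$, so the $\mathbf{P}_0^\perp$-component of $(\mathbf{I}-\mathbf{Z})^{-1}\mathbf{E}$ reduces to $\mathbf{P}_0^\perp\mathbf{E}$. The piece $\mathbf{L}_1^{\rm T}(\delta)\mathbf{E} = (\mathbf{I}-\mathbf{Z})^{-1}\mathbf{P}_0\mathbf{B}(\delta)\mathbf{S}_0\mathbf{E}$ is controlled via Lemma \ref{lem:addit_norms}(3) (using $\mathbf{S}_0=\mathbf{P}_0^\perp/\mu$). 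The main obstacle will be the bookkeeping: each application of $\|\mathbf{AB}\|_{\max}\leq L\,\|\mathbf{A}\|_{\max}\|\mathbf{B}\|_{\max}$ introduces a factor $L\sim\alpha N$, so the sharpest available estimates $O(1/N^2)$ from Lemma \ref{lem:addit_norms}(2)--(3) and from Lemma \ref{lem:Zmax1} must be placed at exactly the right positions in each product to land on $O(1/N)$ rather than a merely $O(1)$ bound.
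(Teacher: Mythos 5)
Your proposal follows essentially the same route as the paper's own proof: the same decomposition \eqref{eq:deltaH_pres2}, the bound $\|\mathbf{P}_0^\perp(\delta)-\mathbf{P}_0^\perp-\mathbf{L}(\delta)\|\cdot\|\mathbf{H}(\delta)\|=O(1/N)$ via Theorem \ref{theor:L_delta_proj}, Lemma \ref{lem:b_delta_t0} and \eqref{eq:SOB}, the identities $\mathbf{P}_0\mathbf{H}=\mathbf{0}$, $\mathbf{Z}\mathbf{H}=\mathbf{0}$ killing $\mathbf{L}_1(\delta)\mathbf{H}$, and the max-norm bookkeeping through Lemmas \ref{lem:Zmax1} and \ref{lem:addit_norms} with the Neumann bound \eqref{eq:sumZ_max}. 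The minor variations (explicitly evaluating the rank-one term $\mathbf{P}_0^\perp\mathbf{B}(\delta)\mathbf{H}$, noting $\mathbf{P}_0^\perp\mathbf{Z}=\mathbf{0}$ instead of $\mathbf{P}_0\mathbf{Z}^m=\mathbf{Z}^m$) are cosmetic, and the cited estimates do suffice to close the argument.
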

\begin{proof}
First of all, see Lemma \ref{lem:b_delta_t0}, the inequality \eqref{eq:L_delta_proj_mainterm} holds for any $\delta$
such that $|\delta|<\delta_0$.
Then, due to \eqref{eq:deltaH_pres2},
\begin{gather}
\label{eq:decomp}
{\bf P}_0^{\perp}(\delta){\bf H}(\delta) - {\bf P}_0^{\perp}{\bf H} = \big({\bf P}_0^{\perp}(\delta)-{\bf P}_0^{\perp}-{\bf L}(\delta)\big){\bf H}(\delta) + {\bf L}(\delta){\bf H} + \delta {\bf L}(\delta){\bf E} + \delta {\bf P}_0^{\perp} {\bf E}.
\end{gather}
It follows from
\eqref{eq:L_delta_proj_mainterm} and \eqref{eq:SOB} that
$
\|\big({\bf P}_0^{\perp}(\delta)-{\bf P}_0^{\perp}-{\bf L}(\delta)\big)\| = O(1/N^2)
$
for $|\delta|<\delta_0$, and therefore
\begin{gather*}
\left\|\big({\bf P}_0^{\perp}(\delta)-{\bf P}_0^{\perp}-{\bf L}(\delta)\big){\bf H}(\delta)\right\|\leq
\left\|\big({\bf P}_0^{\perp}(\delta)-{\bf P}_0^{\perp}-{\bf L}(\delta)\big)\right\|
\|\mathbf{H}(\delta)\|
 = O(1/N).
\end{gather*}
Thus we must check the tree last terms in the righthand side of \eqref{eq:decomp}.

Note that $\|{\bf P}_0^{\perp} {\bf E}\|_{\max} = O(1/N)$, see Lemma \ref{lem:addit_norms}.
Let us consider operators ${\bf L}(\delta){\bf H}$ and ${\bf L}(\delta){\bf E}$.
As in Lemma \ref{lem:Zmax1}, put ${\bf Z} = \delta{\bf A}_0^{(2)}/\mu$.
Since ${\bf P}_0{\bf H}=\mathbf{0}$, then ${\bf Z}{\bf H}=\mathbf{0}$ and
\begin{gather*}
{\bf L}_1(\delta){\bf H}={\bf S}_0{\bf B}(\delta){\bf P}_0\big({\bf I} - {\bf Z}\big)^{-1}{\bf H}=
{\bf S}_0{\bf B}(\delta){\bf P}_0\sum_{m\geq 0} \mathbf{Z}^m {\bf H}={\bf 0}.
\end{gather*}
Therefore,
\begin{gather*}
{\bf L}(\delta){\bf H}=
 {\bf L}^{\rm T}_1(\delta){\bf H}=\Big({\bf I} - {\bf Z}\Big)^{-1}\frac{{\bf P}_{0}{\bf B}(\delta){\bf H}}{\mu}=
\frac{1}{\mu}\left({\bf I}+\sum_{m\geq 1}\mathbf{Z}^m\right){\bf P}_{0}{\bf B}(\delta){\bf H}=
\\
\frac{1}{\mu}{\bf P}_{0}{\bf B}(\delta){\bf H}+
\frac{1}{\mu}\sum_{m\geq 1}
\mathbf{Z}^m{\bf P}_{0}{\bf B}(\delta){\bf H}=
\frac{1}{\mu}{\bf P}_{0}{\bf B}(\delta){\bf H}+
\frac{1}{\mu}\left(\sum_{m\geq 1}\mathbf{Z}^m\right){\bf B}(\delta){\bf H}.
\end{gather*}

Since $\mathbf{P}_0^\perp=W_LW_L^{\rm T}/\|W_L\|^2$ and
$
\|W_LW_L^{\rm T}\|_{\max}
\leq a^{2T}$, then $\|\mathbf{P}_0^\perp\|_{\max}=O(1/N)$
and, in view of Lemma \ref{lem:addit_norms},
\begin{gather*}
\big\|{\bf P}_{0}{\bf B}(\delta){\bf H}/\mu\big\|_{\max}=
\left\|\big(\mathbf{I}-\mathbf{P}_0^\perp\big){\bf B}(\delta){\bf H}/\mu\right\|_{\max}\leq
\\
\frac{\left\|{\bf B}(\delta){\bf H}\right\|_{\max}}{{\mu}}+
\frac{\left\|\mathbf{P}_0^\perp{\bf B}(\delta){\bf H}\right\|_{\max}}{\mu}\leq
\frac{\left\|{\bf B}(\delta){\bf H}\right\|_{\max}}{{\mu}}+
{L\left\|\mathbf{P}_0^\perp\right\|_{\max}}\ \frac{\left\|{\bf B}(\delta){\bf H}\right\|_{\max}}{{\mu}}=O(1/N).
\end{gather*}
Besides, if additionally $|\delta| C_2<1$, then,  due to \eqref{eq:mu_as} and \eqref{eq:sumZ_max},
\begin{gather*}
\Big\|\frac{1}{\mu}\Big(\sum_{m\geq 1}\mathbf{Z}^m\Big){\bf B}(\delta){\bf H}\Big\|_{\max}
\leq
\frac{L}{\mu}\Big\|\sum_{m\geq 1}\mathbf{Z}^m\Big\|_{\max}\ \Big\|{\bf B}(\delta){\bf H}\Big\|_{\max}\leq
\\
\frac{L}{\mu}\ \frac{|\delta| C_2}{1-|\delta| C_2}\, \frac{1}{N}\ \Big\|{\bf B}(\delta){\bf H}\Big\|_{\max}=O(1/N).
\end{gather*}
As the result, $\|{\bf L}(\delta){\bf H}\|_{\max}=O(1/N)$.


By definition, ${\bf L}(\delta){\bf E} = {\bf L}_1(\delta){\bf E}+{\bf L}^{\rm T}_1(\delta){\bf E}$ with
\begin{gather*}
{\bf L}_1(\delta){\bf E}=
{\bf S}_0{\bf B}(\delta){\bf P}_0 \big({\bf I} -{\bf Z})^{-1}{\bf E} \quad \text{and}\ \
{\bf L}^{\rm T}_1(\delta){\bf E}=\big({\bf I} - {\bf Z})^{-1}{\bf P}_0{\bf B}(\delta){\bf S}_0{\bf E}.
\end{gather*}

The equality $\|{\bf L}^{\rm T}_1(\delta){\bf E}\|_{\max}=O(1/N^{2})$ can be demonstrated
in the same manner as $\|{\bf L}^{\rm T}_1(\delta){\bf H}\|_{\max} = O(1/N)$, with the help of the equality
$\|{\bf B}(\delta){\bf S}_0{\bf E}\|_{\max} = O(1/N^2)$
(see Lemma~\ref{lem:addit_norms}).

Now note that
\begin{gather*}
 {\bf L}_1(\delta){\bf E}=
 {\bf S}_0{\bf B}(\delta){\bf P}_0 {\bf E}+{\bf S}_0{\bf B}(\delta){\bf P}_0\left(\sum_{m\geq 1}\mathbf{Z}^m\right){\bf E}=
  {\bf S}_0{\bf B}(\delta){\bf P}_0{\bf E}+{\bf S}_0{\bf B}(\delta)\left(\sum_{m\geq 1}\mathbf{Z}^m\right){\bf E}.
\end{gather*}
Taking into account  \eqref{eq:sumZ_max} and equalities $\left\|{\bf E}\right\|_{\max}=O(1)$,
$\|{\bf S}_0{\bf B}(\delta)\|_{\max} = O(1/N^2)$ we get
\begin{gather*}
\bigg\|{\bf S}_0{\bf B}(\delta)\Big(\sum_{m\geq 1}\mathbf{Z}^m\Big){\bf E}\bigg\|_{\max}\leq
L^2\left\|{\bf S}_0{\bf B}(\delta)\right\|_{\max}\Big\|\sum_{m\geq 1}\mathbf{Z}^m\Big\|_{\max}
\left\|{\bf E}\right\|_{\max}=O(1/N).
\end{gather*}

Lastly,
\begin{gather*}
\| {\bf S}_0{\bf B}(\delta){\bf P}_0{\bf E} \|_{\max}\leq
\| {\bf S}_0{\bf B}(\delta){\bf E} \|_{\max}+\| {\bf S}_0{\bf B}(\delta){\bf P}_0^{\perp}{\bf E} \|_{\max}\leq
\\
L \| {\bf S}_0{\bf B}(\delta)\|_{\max} \|{\bf E} \|_{\max} +L^2 \| {\bf S}_0{\bf B}(\delta)\|_{\max}\, \| {\bf P}_0^{\perp}\|_{\max}
\, \|{\bf E} \|_{\max}=O(1/N).
\end{gather*}

Therefore,
$\|{\bf L}(\delta){\bf E}\|_{\max} = O(1/N)$.
Finally, the uniform norm $\|\, \cdot\, \|_{\max}$ of each addend in the sum \eqref{eq:deltaH_pres2} has the order $O(1/N). $
Since $\|\mathcal{S}{\bf C}\|_{\max} \leq \|{\bf C}\|_{\max}$ for any matrix $\mathbf{C}$, the proof is complete.
\end{proof}






\end{document}